\documentclass{article}

\usepackage{style}
\usepackage[margin=1in]{geometry}

\begin{document}

\title{Non-adaptive Learning of Random Hypergraphs with Queries}
\author{Bethany Austhof \thanks{University of Illinois at Chicago. Email: \texttt{bausth2@uic.edu}}
\and
Lev Reyzin \thanks{University of Illinois at Chicago. Email: \texttt{lreyzin@uic.edu}}
\and
Erasmo Tani\thanks{Sapienza University of Rome. Email: \texttt{tani@di.uniroma1.it}}}

\maketitle

\begin{abstract}
We study the problem of learning a hidden hypergraph $G=(V,E)$ by making a single batch of queries (non-adaptively). We consider the hyperedge detection model, in which every query must be of the form:
``{\em Does this set $S\subseteq V$ contain at least one full hyperedge?}'' In this model, it is known \citep{abasi2019learning} that there is no algorithm that allows to non-adaptively learn arbitrary hypergraphs by making fewer than $\Omega(\min\{m^2\log n, n^2\})$ even when the hypergraph is constrained to be $2$-uniform (i.e. the hypergraph is simply a graph). Recently, \cite{li2019learning} overcame this lower bound in the setting in which $G$ is a graph by assuming that the graph learned is sampled from an Erd\H{o}s-R\'enyi model. We generalize the result of Li et al.\ to the setting of random $k$-uniform hypergraphs. To achieve this result, we leverage a novel equivalence between the problem of learning a single hyperedge and the standard group testing problem. This latter result may also be of independent interest.
\end{abstract}

\section{INTRODUCTION}
The problem of learning graphs through edge-detecting queries has been extensively studied due to its many applications, ranging from learning pairwise chemical reactions to genome sequencing problems \citep{AlonA05,AlonBKRS04,angluin2008learning,GrebinskiK98,ReyzinS07}. While significant progress has been made in this area, less is known about efficiently learning hypergraphs, which have now become the de facto standard to model higher-order network interactions \citep{battiston2020networks,benson2016higher,lotito2022higher}. In this paper, we take another step toward bridging this gap in the literature.

We study the problem of learning a hypergraph by making hyperedge-detecting queries. In particular, we focus on the \emph{non-adaptive setting} (in which every query needs to be submitted in advance), that is more suited for bioinformatics applications. 

A lower bound of \cite{abasi2019learning} shows that it is impossible in general to design algorithms that achieve a query complexity nearly linear in the number of (hyper)edges, even for graphs. A recent paper by \cite{li2019learning} shows that this lower bound can be beaten for graphs that are generated from a known Erd\H{o}s-R\'enyi model. We extend their results by providing algorithms for learning random hypergraphs that have nearly linear query complexity in the number of hyperedges.

\subsection{Background and Related Work}

\paragraph{From Group Testing to Hypergraph Query Learning} In the standard group testing model \cite{aldridge2019group,dorfman1943detection,du1999combinatorial}, one is given a finite set, containing a(n unknown) set of faulty elements. The main task of interest is to recover the set of faulty elements exclusively by repeatedly asking questions of the form:

\begin{center}
    {\em ``Does this subset contain at least one faulty element?''}
\end{center}

At its core, the problem of learning a hypergraph via hyperedge-detection queries is a constrained group testing problem. Here, the role of the faulty item is taken by the hyperedges of an unknown $k$-uniform hypergraph $G$ supported on a known set of vertices $V$. Note that, if one was allowed to ask whether an arbitrary collection $\mathcal{S}$ of elements of $\binom{V}{k}$ contains a hyperedge, then the problem would be entirely analogous to the standard group testing problem. Instead, we require that the collection of hyperedges queried is of the form $\mathcal{S} = \binom{S}{k}$ for some subset $S \subseteq V$. Intuitively, the fact that the queries must be specified by a subset of $V$, as opposed to a subset of $\binom{V}{k}$, renders the problem more difficult.

Recent advances in this model focus on algorithms that achieve low decoding time, and in this paper, we make use of a result of~\cite{cheraghchi2019simple} within a reduction used by one of our algorithms.

\paragraph{Learning Hypergraphs} \cite{torney1999sets} was the first to generalize group testing to the problem of learning hypergraphs via hyperedge-detection queries, a problem he refers to as testing for \emph{positive subsets}. The problem has since come under different names including \emph{group testing for complexes} \citep{chodoriwsky2015adaptive,macula2004trivial} and the \emph{monotone DNF query learning problem}~\citep{angluin1988queries,gao2006construction,abasi2014exact}.

\cite{angluin2008learning} show that learning arbitrary (non-uniform) hypergraphs of rank $k$ with $m$ hyperedges requires at least $\Omega\left(\left({2m\over k}\right)^{k/2}\right)$ queries.\footnote{We note that in general, one must require the hypergraph to be a \emph{Sperner} hypergraph, i.e. one in which no hyperedge is a subset of another, since otherwise the learning problem is not identifiable (See, e.g. \cite{abasi2018non}).} The same authors~\citep{angluin2006learning} showed that an arbitrary $k$-uniform hypergraph can be learned with high probability by making at most $O(2^{4k} m \operatorname{poly}(m,\log n))$ hyperedge-detection queries. Their algorithm makes use of $O(\min\{2^k(\log m + k)^2, (\log m +k)^3\})$ adaptive rounds. They also relax the uniformity condition by giving algorithms that perform well when the hypergraph is nearly uniform.

\cite{abasi2014exact} designed randomized adaptive algorithms for learning arbitrary hypergraphs with hyperedge-detecting queries. They also provide lower bounds for the problem they consider.

\cite{gao2006construction} gave the first explicit non-adaptive algorithm for learning $k$-uniform hypergraphs (exactly and with probability one) from hyperedge-detection queries. \cite{abasi2018non} then give non-adaptive algorithms for learning arbitrary hypergraphs of rank (at most) $k$ in the same setting that run in %
polynomial time in the optimal query complexity for their version of the problem. This, in general, may not be polynomial in the size of the hypergraph. \cite{abasi2018error} considers the same problem in the presence of errors. In particular, they focus on a model in which up to an $\alpha$-fraction of the queries made may return the incorrect answer.

\cite{balkanski2022learning} study algorithms for learning restricted classes of hypergraphs. They give an $O(\log^3 n)$-adaptive algorithm for learning an arbitrary hypermatching (a hypergraph with maximum degree 1) which makes $O(n\log^5 n)$ hyperedge-detection queries and returns the correct answer with high probability. %

\subsection{Our Results}

In this paper, we generalize the results of \cite{li2019learning} to Erd\H{o}s-R\'enyi hypergraphs.

In Section~\ref{sec:typical-instances}, we discuss a class of typical instances and use it to derive unconditional lower bounds on the learning problem. In Section~\ref{sec:grotesque}, we give an algorithm which solves the problem with low query complexity and decoding time. In particular, we prove the following:
\begin{restatable}{theorem}{grotesque}\label{thm:main-grotesque-theorem}
    There exists an algorithm (Algorithm~\ref{alg:grotesque}) that on input a hyperedge-detection oracle for an Erd\H{o}s-R\'enyi hypergraph, makes $O(k\bar{m} \log^2\bar{m} + k^2 \bar{m} \log \bar{m} \log^2 n)$ non-adaptive queries to the oracle, and outputs the correct answer with probability $\Omega(1)$. Here, the probability is taken over the randomness in both the algorithm and in the hypergraph. The algorithm requires $O(k\bar{m} \log^2\bar{m} + k^3 \bar{m} \log \bar{m} \log^2 n)$ decoding time.
\end{restatable}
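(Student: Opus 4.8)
The plan is to combine the structural description of typical Erd\H{o}s-R\'enyi hypergraphs from Section~\ref{sec:typical-instances}, a randomized reduction that splits the hidden hypergraph into many sub-instances each containing at most one hyperedge, and the equivalence between learning a single hyperedge and combinatorial group testing, into which we plug the efficiently-decodable non-adaptive scheme of \cite{cheraghchi2019simple}. I would first condition on $G$ being a \emph{typical instance}: by the results of Section~\ref{sec:typical-instances}, with probability $1-o(1)$ the sampled hypergraph has $\Theta(\bar m)$ hyperedges, $O(k\bar m)$ vertices of positive degree, polylogarithmic maximum degree, and --- the part I would lean on most --- relatively few pairs of hyperedges intersect, with no two hyperedges sharing more than a constant number of vertices. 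Everything below is conditioned on this event, costing only an additive $o(1)$ in the success probability.

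Next I would set up the reduction to single-hyperedge learning. Taking $O(\log\bar m)$ independent random partitions of $V$, each into $t = \tilde\Theta(\bar m)$ parts, I would make a batch of queries, each of which is a union of parts, designed so that for every hyperedge $e$ we can certify a union $S$ of at most $k$ parts with $e\subseteq S$ and $e$ the \emph{unique} hyperedge contained in $S$. The intersection bounds from the first step are exactly what make this possible with only $\tilde\Theta(\bar m)$ parts: they imply that, across the random partitions, with high probability the map sending each hyperedge to the tuple of parts its vertices occupy is injective on $E$ in at least one partition, and that two distinct hyperedges essentially never get swallowed by the same block of $k$ parts. This stage amounts to a group-testing-type problem on a ground set of near-linear size in $\bar m$, which I would solve by a direct disjunct-matrix construction; accounting for the $O(\log\bar m)$ partitions, this contributes the $k\bar m\log^2\bar m$ term to the query complexity.

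With each hyperedge isolated into its own sub-instance, I would apply the equivalence proved earlier in the paper: a hyperedge-detection query on a set $S$ returns $1$ exactly when $V\setminus S$ contains none of the $k$ vertices of the (unique) hidden hyperedge, so after complementing the test sets, recovering that hyperedge \emph{is} non-adaptive group testing for $k$ defectives on the sub-instance's ground set (of size at most $n$). Plugging in the scheme of \cite{cheraghchi2019simple} with failure probability below $1/\mathrm{poly}(\bar m)$ gives the query count for this stage, $k^2\bar m\log\bar m\log^2 n$, where one factor of $\log n$ comes from the ground-set size and the second from the boosting needed to union-bound over all $O(\bar m\log\bar m)$ sub-instances; the $k^3$-versus-$k^2$ gap between decoding time and query complexity is inherited directly from that scheme. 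I would then finish by a union bound over the typical-instance event, the isolation events across all partitions, and the correctness of every group-testing decoding, and by merging the recovered hyperedges (discarding duplicates produced by different partitions), whose cost is dominated by the decoding bound.

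The main obstacle is the isolation step. For graphs ($k=2$) the corresponding argument in \cite{li2019learning} is comparatively clean, but a $k$-uniform hyperedge occupies $k$ parts simultaneously, so both ``all $k$ vertices fall in distinct parts'' and ``no other hyperedge is captured by the same $k$ parts'' degrade as $k$ grows; one has to tune the number of parts $t$ and the number of random partitions, and feed in exactly the right intersection-pattern estimates from the typical-instance analysis, so that the localization cost stays at $\tilde O(k^2\bar m)$ rather than picking up extra powers of $k$ or of $\bar m$ --- in particular, avoiding the quadratic-in-$\bar m$ blow-up that an instance-oblivious group testing over hyperedges would incur. Getting this balance right is precisely where the Erd\H{o}s-R\'enyi assumption does the work.
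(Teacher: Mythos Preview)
Your high-level decomposition---isolate each hyperedge into a sub-instance, then invoke the reduction to group testing and plug in \cite{cheraghchi2019simple}---matches the paper, but your isolation step is both different from the paper's and not pinned down enough to yield the stated bounds. The paper does \emph{not} use random partitions and unions of $k$ parts. It forms $b=\Theta(\bar m\log\bar m)$ independent Bernoulli \emph{bundles} (each vertex in each bundle with probability $1/\sqrt[k]{2m}$) and then runs a \emph{multiplicity test} on every bundle: a batch of Bernoulli hyperedge-detection queries inside the bundle, with the fraction of positives used to distinguish the three cases ``$0$ hyperedges / exactly $1$ / at least $2$'' (Lemma~\ref{lem:detection-probability} supplies the separation in the $\geq 2$ case). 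That statistical test is the mechanism you are missing; it is what produces the $k\bar m\log^2\bar m$ term ($b$ bundles times $O(k\log(1/\delta^*))$ queries with $\delta^*=\Theta(1/(\bar m\log\bar m))$), not any disjunct-matrix construction. Your proposed route---``certify a union $S$ of at most $k$ parts with $e\subseteq S$ and $e$ the unique hyperedge in $S$''---does not explain how, non-adaptively and within the budget, you locate the right $k$-tuples of parts: there are $\binom{t}{k}$ of them, and a hyperedge-detection query on a union of parts is again an induced query over those tuples, so you are back to the original constrained problem on a smaller ground set without a concrete test design.

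There is also a misreading of \cite{cheraghchi2019simple}: that scheme is deterministic (explicit codes with list-recovery decoding), so there is no failure probability to boost and hence no ``second $\log n$ from a union bound.'' The $O(k^2\log^2 n)$ query count and $O(k^3\log^2 n)$ decoding time are intrinsic to that construction, and the paper simply multiplies them by the number of bundles $b=\Theta(\bar m\log\bar m)$ on which a location test is run. Finally, the typical-instance analysis in Section~\ref{sec:typical-instances} gives you control over $m(G)$, $\Delta(G)$, and $P_G$; it does not establish the ``few intersecting pairs'' property you invoke, and the paper's bundle analysis (Lemma~\ref{lemma:hyperedge-is-unique-in-bundle}) needs only the $\Delta(G)$ bound together with the sparsity assumption $m=o(n^{k/(k-1)})$.
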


In Section \ref{sec:lower-query-complexity-algorithms}, we will go over hypergraph adaptions of popular group testing algorithms. Specifically, we adapt the \texttt{COMP}, \texttt{DD} and \texttt{SSS} algorithms (see, e.g. \cite{aldridge2019group}), and establish that they all output the correct hypergraph with probability $\Omega(1)$ with $\Omega(\bar{m}\log n)$ queries, thus achieving a better query complexity than the algorithm in Theorem~\ref{thm:main-grotesque-theorem} at the price of a higher decoding time. %

\section{PRELIMINARIES}

\paragraph{Erd\H{o}s-R\'enyi Hypergraphs.} A $k$-uniform hypergraph is a tuple $G=(V,E)$ where $V$ is a finite set and $E \subseteq \binom{V}{k}$ is a collection of $k$-element subsets of $V$, called hyperedges. We refer to the elements of $V$ as \emph{nodes} or \emph{vertices} and denote by $n$ the number $|V|$ of vertices in $G$, and by $m$ the number $|E|$ of hyperedges. Whenever the hypergraph $G$ is not clear from context, we may use $m(G)$ to refer to the number of hyperedges in $G$. We refer to the cardinality $k$ of the hyperedges of $G$ as the \emph{rank} or the \emph{arity} of $G$.  While our guarantees have an explicit dependence on $k$, we will focus on the regime in which $k$ does not grow with $n$, i.e. $k=O(1)$.

For any hypergraph $G$, we define its maximum degree as:
\[
    \Delta(G) := \max_{v\in V}|\{h\in E\mid v\in h\}|.
\]

We will consider hypergraphs generated according to the Erd\H{o}s-R\'enyi model $G^{(k)}(n,q)$ in which every $k$-subset of $V$ is present with probability $q$. We denote by $\overline{m}$ the expected number of hyperedges in $G$ under this generative model, i.e.:
\[
    \overline{m} =q\binom{n}{k}.
\]
Note that, under this generative model, $m$ is a random variable, while $\overline{m}$, $n$ and $k$ are deterministic quantities.

\paragraph{Problem Setup.} This paper aims to build on the structure of \cite{li2019learning}, generalizing their results to hypergraphs. With that in mind, we briefly go over the setting; learning an unknown hypergraph generated via the Erd\H{o}s-R\'enyi model. We note that after generation, our hypergraph, $G$, remains fixed, and we try to uncover $G$ through a series of queries to an oracle. Where we ask if a set of vertices contains an edge. 

Upon completion of querying, the results create a decoder that forms an estimate of $G$, $\hat{G}$. The focus of this paper is to find algorithms minimizing the amount of queries, while maintaining the probability the decoder recovers $G$ be arbitrarily close to one. %

\paragraph{Sparsity Level.} As~\cite{li2019learning} limit their results to sparse graphs, we limit the scope of this paper to the standard notion of sparse hypergraphs in the following sense.
We assume that $q=o(1)$ as $n \rightarrow\infty$, and throughout this paper we will set $q=\Theta\left(n^{-k(1-\theta)}\right)$ for some $\theta \in(0,1)$, so that the average number of hyperedges $\bar{m}=\binom{n}{k}q$ behaves as $\Theta\left(n^{k \theta}\right)$. 
For efficient decoding results pertaining to Algorithm~\ref{alg:grotesque}, we use a stronger notion of
sparsity, where a superlinear number of edges 
are still allowed, but we further assume that $m = o( n^{k\over k-1})$.
We leave the question of tackling less sparse hypergraphs
open.

\paragraph{Bernoulli Random Queries.} We will often make use of Bernoulli queries, also known as Bernoulli tests. A Bernoulli query on a hypergraph $G=(V,E)$ is one in which the query is selected at random, by including each vertex $v\in V$ to be queried with a fixed probability $p$, independently of all other vertices. Following \cite{li2019learning}, we set $p=\sqrt[k]{\frac{k\nu}{qn^k}}$ for some constant $\nu >0$, and we note that this choice of $p$ gives $p^k=\frac{\nu}{m}(1+o(1))$, since $\bar{m}=\frac{1}{k} q n^k(1+o(1))$. Given a fixed hypergraph $G$ we will denote by $P_G$ the probability that a Bernoulli test with parameter $p$ as above is positive.

\section{TYPICAL INSTANCES}\label{sec:typical-instances}
In this section, we identify a set of typical instance arising from the random hypergraph model. This will allow us to make assumptions about the structure of the specific instance we are learning. We then use this to derive an information-theoretic lower bound on the query complexity of non-adaptively learning hypergraphs.

\begin{definition}[$\varepsilon$-typical Hypergraph Set]
For any $\varepsilon>0$, we define the $\varepsilon$-typical hypergraph set as the set $\mathcal{T}(\varepsilon)$ of hypergraphs $G$ satisfying both of the following conditions:
\begin{enumerate}
    \item $\left(1-\varepsilon\right) \bar{m} \leq m(G) \leq\left(1+\varepsilon\right) \bar{m}$,
    \item $\Delta(G) \leq d_{\max }$,
    \item $ \left(1-\varepsilon\right)\left(1-e^{-\nu}\right) \leq P_G \leq\left(1+\varepsilon\right)\left(1-e^{-\nu}\right)$.
\end{enumerate}

where:
$$
d_{\max }= \begin{cases}k n^{k-1} q & \theta>\frac{1}{k} \\ \log n & \theta \leq \frac{1}{k} .\end{cases}
$$
\end{definition}

We now show that, for any $\varepsilon >0$, $
\Pr[G \in T(\varepsilon)] \to 1$ as $n \to \infty$, where the probability is taken over the random choice of $G$ from $G^{(k)}(n,q)$. This key result is a hypergraph analogue of a similar result appearing in the paper of~\cite{li2019learning}.

\begin{lemma}\label{lem:typical-set}
    For any $\varepsilon > 0$, we have:
    \[
        \Pr[G \in T(\varepsilon)] \to 1
    \]
    as $n \to \infty$.
\end{lemma}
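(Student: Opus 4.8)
The plan is to establish each of the three defining conditions of $\mathcal{T}(\varepsilon)$ separately and then apply a union bound, since the failure of $G\in\mathcal{T}(\varepsilon)$ is the union of the three failure events, and it suffices to show each has probability $o(1)$.

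For condition (1), note that $m(G)$ is a sum of $\binom{n}{k}$ i.i.d.\ Bernoulli$(q)$ indicators, one per potential hyperedge, with mean $\bar m = q\binom{n}{k}$. Since $\bar m = \Theta(n^{k\theta}) \to \infty$, a Chernoff bound gives $\Pr[|m(G)-\bar m| > \varepsilon\bar m] \le 2\exp(-\Omega(\varepsilon^2\bar m)) = o(1)$. For condition (2), fix a vertex $v$; its degree $\deg(v)$ is a sum of $\binom{n-1}{k-1}$ i.i.d.\ Bernoulli$(q)$ indicators with mean $\mu := q\binom{n-1}{k-1} = \Theta(n^{k-1}q) = \Theta(n^{(k-1)-k(1-\theta)}) = \Theta(n^{k\theta - 1})$. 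When $\theta > 1/k$ this mean tends to infinity and a Chernoff bound shows $\deg(v)$ exceeds $d_{\max} = kn^{k-1}q = (1+o(1))\cdot k\mu$ with probability $\le \exp(-\Omega(\mu))$; taking a union bound over the $n$ vertices still gives $o(1)$ provided $\mu$ grows fast enough relative to $\log n$, which it does since $\mu = \Theta(n^{k\theta-1})$ is polynomially large. When $\theta \le 1/k$, $\mu = O(1)$, and here one uses a Poisson-type tail bound: $\Pr[\deg(v) \ge \log n] \le \binom{\binom{n-1}{k-1}}{\log n} q^{\log n} \le \left(\frac{e\mu}{\log n}\right)^{\log n} = n^{-\omega(1)}$, so the union over $n$ vertices is again $o(1)$.

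For condition (3), the idea is to show that $P_G$ concentrates around its "typical" value $1-e^{-\nu}$. Writing $P_G = \Pr_S[\,\binom{S}{k}\cap E \neq \emptyset\,]$ where $S$ is the Bernoulli-$p$ random query, we have $1 - P_G = \mathbb{E}_S\!\left[\prod_{h\in E}(1 - \mathbf{1}[h\subseteq S])\right]$; taking expectation over the random choice of $G$ as well, each hyperedge $h$ is absent with probability $1-q$, present with probability $q$, and independently $h\subseteq S$ with probability $p^k$, so $\mathbb{E}_G[1-P_G] = (1 - q p^k)^{\binom{n}{k}} = (1 - \frac{\nu}{\bar m}(1+o(1)))^{\bar m(1+o(1))} \to e^{-\nu}$. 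To get concentration for a typical $G$ rather than just in expectation, the cleanest route is to condition on the event from conditions (1)–(2) (so $m(G) = \Theta(\bar m)$ and $\Delta(G) \le d_{\max}$), and then bound $1-P_G$ both in expectation and by a bounded-differences / martingale argument over the presence of individual hyperedges — changing one hyperedge's presence changes $1-P_G$ by at most $p^k = O(1/\bar m)$, and there are $\Theta(\bar m)$ relevant coordinates, giving a variance-type bound of $O(1/\bar m)\cdot$(something) that still needs the degree bound to control how many edges lie in a typical $S$. Alternatively, follow \cite{li2019learning}: compute $\mathbb{E}[P_G]$ and $\mathrm{Var}[P_G]$ directly, showing the variance is $o(1)$ using the maximum-degree bound to control the correlation between the events $h\subseteq S$ and $h'\subseteq S$ for overlapping hyperedges, then apply Chebyshev.

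The main obstacle is condition (3): both the first and second moment of $P_G$ involve sums over pairs of hyperedges, and the cross terms depend on how much two hyperedges overlap — pairs sharing $j$ vertices contribute a factor $p^{2k-j}$, and one must verify that the contribution of overlapping pairs (which is where $\Delta(G) \le d_{\max}$ enters, bounding the number of such pairs) is lower-order, so that $\mathrm{Var}[P_G] = o((1-e^{-\nu})^2)$. This is the step where the two cases of $d_{\max}$ and the sparsity assumption $q = \Theta(n^{-k(1-\theta)})$ are genuinely used; conditions (1) and (2) are routine Chernoff/Poisson-tail estimates by comparison.
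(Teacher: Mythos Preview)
Your proposal is correct and follows essentially the same approach as the paper: decompose $\mathcal{T}(\varepsilon)$ into the three conditions, handle (1) and (2) by standard binomial/Chernoff tail bounds (with the same case split on $\theta$ versus $1/k$ for the degree condition), and for (3) invoke the second-moment/Chebyshev argument of \cite{li2019learning}, with the arity $2$ replaced by $k$. The paper's own proof is in fact terser than yours---it simply cites \cite{li2019learning} for condition (3)---and your sketch of the variance computation (cross terms indexed by the overlap $j$ of two hyperedges, contributing $p^{2k-j}$, with $d_{\max}$ controlling the number of overlapping pairs) accurately captures what that deferred argument contains; your first, bounded-differences alternative for (3) is less clean because the natural coordinates are the $\binom{n}{k}$ potential hyperedges rather than $\Theta(\bar m)$ of them, but you rightly fall back on the variance route.
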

\begin{proof}
We begin by noting that the set $\mathcal{T}(\varepsilon)$ can be written as $\mathcal{T}(\varepsilon) = \mathcal{T}^{(1)}(\varepsilon) \cap \mathcal{T}^{(2)}(\varepsilon) \cap \mathcal{T}^{(3)}(\varepsilon)$, where:
\begin{align*}
    \mathcal{T}^{(1)} &= \{G : \left(1-\varepsilon\right) \bar{m} \leq m(G) \leq\left(1+\varepsilon\right) \bar{m}\}\\
    \mathcal{T}^{(2)} &= \{G : \Delta(G) \leq d_{\max }\}\\
    \mathcal{T}^{(3)} &= \{G : \left(1-\varepsilon\right)\left(1-e^{-\nu}\right)\leq P_G 
    \leq\left(1+\varepsilon\right)\left(1-e^{-\nu}\right)\}
\end{align*}

It is then sufficient to show that $\Pr[G \in \mathcal{T}^{(i)}] \to 1$ for every $i =1, 2, 3$.

Since $m(G)$ follows a binomial distribution with parameters $\binom{n}{k}$ and $q$, we have:
\[
    \Pr[(1-\varepsilon)\bar{m} \leq m(G) \leq (1+\varepsilon) \bar{m}] \to 1
\]
as $\binom{n}{k}\to\infty$. This yields $\Pr[G \in \mathcal{T}^{(1)}] \to 1$.

We now establish that $\Pr[G \in \mathcal{T}^{(2)}] \to 1$: 

\begin{enumerate}
    \item If $\theta>\frac{1}{k}$, then the combinatorial degree of each vertex follows a binomial distribution with mean $\binom{n-1}{k-1}q=\Theta(n^c)$ for some $c>0$. We can then follow the work \cite{li2019learning}, using the Chernoff bound to show the probability of any degree exceeding $kn^{k-1}q$ goes to zero.

    \item If $\theta\le \frac{1}{k}$, we note that we need only consider the case $\theta =\frac{1}{k}$, as this is when the probability for exceeding $\log n$ degree is highest. Here, we have that the combinatorial degree for a vertex follows a binomial distribution with $\binom{n-1}{k-1}$ trials and success probability $\Theta(\frac{1}{n^{k-1}})$, so the mean is $\Theta(1)$. From here we can once again follow the argument of \cite{li2019learning}, using the standard Chernoff bound to show that the probability of any vertex exceeding $\log n$ degree vanishes. 
\end{enumerate}

The last and most intensive argument is to establish $\mathcal{T}^{(3)}(\varepsilon)$. However, the hypergraph extension of this result is straightforward, we simply adapt the proof in the paper of \cite{li2019learning}, making note that the constant, two, used in the graph case becomes $k$ in our new hypergraph setting.
\end{proof}

A simple consequence of Lemma~\ref{lem:typical-set} is that the algorithm-independent lower bound for the number of tests needed to obtain asymptotically vanishing probability provided in~\cite{li2019learning} holds for general hypergraphs.%

\begin{restatable}{theorem}{LowerBoundTheorem}\label{thm:lower-bound-theorem}
    Under the typical instance setting discussed above, with $q=o(1)$ and an arbitrary non-adaptive test design, to have vanishing error probability we must have at least $\left(\bar{m} \log _2 \frac{1}{q}\right)(1-\eta)$ queries, for arbitrarily small $\eta>0$.
\end{restatable}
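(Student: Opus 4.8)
The plan is to prove this by a direct information-theoretic counting argument that uses only condition~(1) of the typical set (edge-count concentration) together with the fact that, conditioned on its number of edges, an Erd\H{o}s--R\'enyi hypergraph is \emph{uniform}. First, I would record the basic pigeonhole observation: a non-adaptive design making $T$ queries produces an answer vector $Y\in\{0,1\}^T$, so the decoder is a map from $\{0,1\}^T$ to hypergraphs. Hence, for each fixing of the algorithm's internal randomness, the set $\mathcal C$ of hypergraphs on which the decoder is correct satisfies $|\mathcal C|\le 2^T$, and therefore $\Pr[\hat G=G]\le \max_{|\mathcal C|\le 2^T}\Pr[G\in\mathcal C]$, where the probability is over $G\sim G^{(k)}(n,q)$; averaging over the internal randomness does not change this bound.

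Next I would condition on the edge count. Write $N=\binom nk$, so that $\bar m=qN$ and $N/\bar m=1/q$. Conditioned on $m(G)=j$, the hypergraph $G$ is uniformly distributed over the $\binom Nj$ hypergraphs with exactly $j$ edges, so for any $\mathcal C$ with $|\mathcal C|\le 2^T$ we have $\Pr[G\in\mathcal C\mid m(G)=j]\le 2^T/\binom Nj$. Fix $\varepsilon>0$ (to be sent to $0$ slowly at the end), let $\mathcal J=[(1-\varepsilon)\bar m,(1+\varepsilon)\bar m]$, and recall that the event $\{m(G)\in\mathcal J\}$ is exactly $\mathcal T^{(1)}(\varepsilon)$; by the Chernoff bound for $m(G)\sim\mathrm{Bin}(N,q)$ used in the proof of Lemma~\ref{lem:typical-set}, $\Pr[m(G)\notin\mathcal J]=:\delta_n\to 0$. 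Splitting on whether $m(G)\in\mathcal J$ and using that $\binom Nj$ is increasing over this range (since $(1+\varepsilon)\bar m=(1+\varepsilon)qN\ll N/e$ as $q=o(1)$),
\begin{equation*}
    \Pr[G\in\mathcal C]\ \le\ \delta_n+\sum_{j\in\mathcal J}\Pr[m(G)=j]\,\frac{2^T}{\binom Nj}\ \le\ \delta_n+\frac{2^T}{\binom N{\lceil(1-\varepsilon)\bar m\rceil}}.
\end{equation*}

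To finish, I would use that a vanishing error probability forces $\Pr[\hat G=G]=1-o(1)$, so the display gives $2^T\ge(1-o(1))\binom N{\lceil(1-\varepsilon)\bar m\rceil}$. Taking logarithms, using $\binom Nj\ge(N/j)^j$ and $\lceil(1-\varepsilon)\bar m\rceil=(1-\varepsilon)\bar m(1+o(1))$ together with $N/\bar m=1/q$,
\begin{equation*}
    T\ \ge\ (1-\varepsilon)\bar m\Big(\log_2\tfrac1q+\log_2\tfrac1{1-\varepsilon}\Big)(1+o(1)).
\end{equation*}
Since $q=o(1)$ we have $\log_2\frac1q\to\infty$, so the additive $\log_2\frac1{1-\varepsilon}$ term is negligible next to $\bar m\log_2\frac1q$; choosing $\varepsilon=\varepsilon_n\to 0$ slowly enough that $\delta_n\to 0$ still holds (any $\varepsilon_n$ with $\varepsilon_n^2\bar m\to\infty$ suffices, by Chernoff) yields $T\ge\bar m\log_2\frac1q\,(1-o(1))$, which is exactly the claim: for every fixed $\eta>0$ and all sufficiently large $n$, $T\ge(1-\eta)\bar m\log_2\frac1q$.

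The one point that requires care — the main obstacle — is that the Erd\H{o}s--R\'enyi measure, even restricted to the typical set, is very far from uniform: a hypergraph with $(1-\varepsilon)\bar m$ edges is exponentially (in $\bar m\log\frac1q$) more likely than one with $(1+\varepsilon)\bar m$ edges, so a decoder could in principle concentrate its $2^T$ ``slots'' on the low-edge hypergraphs. Conditioning on the \emph{exact} value $m(G)=j$ is what neutralizes this, since the conditional law is then genuinely uniform over $\binom Nj$ hypergraphs; the concentration of $m(G)$ around $\bar m$ simultaneously guarantees that the range of $j$ one must handle is narrow enough that $\binom Nj=2^{\bar m\log_2(1/q)(1+o(1))}$ throughout it, and carries probability $1-o(1)$ so that it cannot be evaded. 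The remaining estimates — $\log_2\binom N{(1-\varepsilon)\bar m}=(1-\varepsilon)\bar m\log_2\frac1q(1+o(1))$ and the compatibility of $\varepsilon_n\to0$ with the Chernoff tail — are routine.
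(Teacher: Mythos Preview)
Your argument is correct. The conditioning on the exact value $m(G)=j$ is the right device: it turns the Erd\H{o}s--R\'enyi measure into a uniform one over $\binom{N}{j}$ hypergraphs, at which point the pigeonhole bound $|\mathcal{C}|\le 2^T$ does all the work, and the monotonicity of $\binom{N}{j}$ on $\mathcal{J}$ (valid since $(1+\varepsilon)qN<N/2$) lets you replace the sum by a single term. The routine estimates at the end are fine.

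The paper takes a different route: it invokes a Fano-type inequality from \cite{li2019learning},
\[
P_e \ \ge\ \Pr[\mathcal{A}]\,\frac{H(G\mid\mathcal{A})-I(G;\hat G\mid\mathcal{A})-\log 2}{\log|\mathcal{G}_{\mathcal{A}}|},
\]
with $\mathcal{A}=\mathcal{T}^{(1)}(\varepsilon)$, and then plugs in $H(G\mid\mathcal{A})=\log|\mathcal{G}_{\mathcal{A}}|=\binom{n}{k}H_2(q)(1+o(1))$ and $I(G;\hat G\mid\mathcal{A})\le t\log 2$, together with $H_2(q)=(q\log\tfrac1q)(1+o(1))$. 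Your approach is more elementary and self-contained: it avoids entropy and mutual information entirely, and it does not rely on the auxiliary computation that $H(G\mid\mathcal{A})$ and $\log|\mathcal{G}_{\mathcal{A}}|$ agree to first order. The Fano route, on the other hand, is the standard template and plugs directly into the machinery already developed in \cite{li2019learning}, so the paper gets the result essentially ``for free'' from the graph case. Both arguments use only condition~(1) of the typical set, and in fact both extend verbatim to adaptive designs (the answer transcript still lies in $\{0,1\}^T$), even though the theorem is stated only for the non-adaptive case.
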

For completeness, we include the full proof of Theorem~\ref{thm:lower-bound-theorem} in Section \ref{sec:proof-of-lower-bound} of the Supplementary Materials.

\section{THE \texttt{HYPERGRAPH-GROTESQUE} ALGORITHM}\label{sec:grotesque}
In this section we give a sublinear-time decoding algorithm for the problem of learning hypergraphs with hyperedge detection queries. As in the previous sections, we assume that the hypergraph is sampled according to the Erd\H{o}s-R\'enyi model, and the probabilistic guarantees of the algorithm will depend on the randomness in both the algorithm and the hypergraph generative process.

We prove the main theorem:
\grotesque*
\begin{algorithm}[H]
    \caption{\texttt{HYPERGRAPH-GROTESQUE}}\label{alg:grotesque}
    \begin{algorithmic}
        \State \textbf{Input:} A hyperedge-detection oracle for a hypergraph~$G$.
        \State \textbf{Output:} A hypergraph $\hat{G} = (V,\hat{E})$.\vspace{2mm}
        \State Let $b = \Theta(\bar{m} \log \bar{m})$ be given as in Section~\ref{ssec:bundles}.
        \State Form bundles $B_1, ..., B_b$ by independently including each vertex $v$ in each bundle $B_i$ with probability $r_{inc} = {1\over \sqrt[k]{2m}}$.
        \State Let $\delta^* \gets O({1 \over \bar{m}\log \bar{m}})$
        \State Initialize $\hat{E} = \emptyset$.
        \For{$i = 1, ... , b$} 
            \If{Multiplicity\_Test$(B_i,\delta^*)$ returns 1}
                \State Perform a location test (Section~\ref{ssec:location-tests}) on $B_i$, and add the resulting hyperedge $h$ to $\hat{E}$. 
            \EndIf
        \EndFor
        \State \textbf{Return} $\hat{G} = (V,\hat{E})$
    \end{algorithmic}
\end{algorithm}

\begin{algorithm}
    \caption{Multiplicity Test}\label{alg:mult-test}
    \begin{algorithmic}
    \State \textbf{Input:} A bundle $B \subseteq V$, an error probability $\delta$.
    \State \textbf{Output:} An outcome in $\{0,1\}$ indicating whether $\mathcal{B}$ contains a single hyperedge. \vspace{2mm}
    \State Let $M= {1\over e}(1-{1 \over \sqrt[k]{e}})$.
    \State Perform $t_{mul} = 2\log (2/\delta)/ M^2$ edge detection queries on $B$ chosen according to a Bernoulli design with parameter $r_{mul} = 1 / \sqrt[k]{e}$. Let $\hat{p}$ be the fraction of queries that return a positive outcome (i.e. the ones for which the set being queried contains a full hyperedge).
    \If{ $\hat{p} \in (0,1/e+M/2)$}
        \State \textbf{Return }1
    \Else
    \State \textbf{Return }0
    \EndIf
    \end{algorithmic}
\end{algorithm}

Similarly to the algorithm of \cite{li2019learning}, our algorithm is inspired by the \texttt{GROTESQUE} procedure first introduced by \cite{cai2017efficient}. In particular, the algorithm is structured according to the following high-level framework:
\begin{enumerate}
    \item In the first step, the algorithm produces random sets of vertices (bundles), obtained by including each vertex in each set independently with a fixed probability. This step is successful if each hyperedge is the unique hyperedge in at least one of the bundles. By a coupon-collector argument, one can bound from below the probability of this step succeeding when the number of bundles is sufficiently large (Section~\ref{ssec:bundles}).
    \item Then, the algorithm performs \emph{multiplicity tests} on each of the sets to identify the ones that contain a unique hyperedge. This works by estimating the probability of a Bernoulli test detecting a hyperedge within a bundle $\mathcal{B}$, and then using this estimate to determine whether the bundle really contains a single hyperedge. This step is successful if every multiplicity test correctly identifies whether a bundle contains a single hyperedge. By applying standard sampling results, it can be shown that, if sufficiently many Bernoulli tests are made, this step is successful with high probability (Section~\ref{ssec:multiplicty-test}). 
    \item Finally, the algorithm performs a \emph{location test} on the sets that passed the multiplicity test, which identifies the unique hyperedge the set contains. This step is successful if every location test correctly identifies the unique hyperedge in a bundle. We show that this step can be performed by leveraging a reduction to the standard group testing problem (Section~\ref{ssec:location-tests}).
\end{enumerate} 

It is not hard to see that if all three steps are successful, one can reconstruct the hypergraph $G$ correctly from the result of the queries.

We note that, while the procedure above is described sequentially, all of the tests needed to carry it out can be performed non-adaptively. 

We will now analyze each step in detail. After that, we complete the proof of Theorem~\ref{thm:main-grotesque-theorem}.

\subsection{Bundles of Tests}\label{ssec:bundles}
Recall that Algorithm~\ref{alg:grotesque} forms a number $b = \Theta(\bar{m}\log \bar{m})$ of bundles of vertices, where each node is placed independently in each bundle with probability $r_{inc} := 1/{\sqrt[k]{2m}}$. 

We say a hyperedge is \emph{fully contained} in a bundle $B$ if all of the vertices in the hyperedge have been placed in $B$. Intuitively, the random process of forming the bundles is successful if, for every hyperedge $h$, there exists a bundle ${B}_i$ such that $h$ is the \emph{unique} hyperedge that is fully contained in ${B}_i$.  We prove the following lemma:

\begin{restatable}{lemma}{LemmaHyperedgeUnique}\label{lemma:hyperedge-is-unique-in-bundle}
    Let $G$ be any $k$-uniform hypergraph. Suppose that the vertices of $G$ are placed into bundles according to the procedure described in Algorithm~\ref{alg:grotesque}. For any fixed hyperedge $h \in G$ and any fixed bundle $B_i$, let $\mathcal{E}_{h,i}$ be the event that $h$ is the only hyperedge fully contained in $B_i$. Then:
    \[
        \Pr[\mathcal{E}_{h,i}] \geq \left(1- r_{inc}k\Delta(G)- m(G)r^k\right) r_{inc}^k.
    \]
\end{restatable}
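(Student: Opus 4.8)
The plan is to decompose $\mathcal{E}_{h,i}$ into two pieces: the event $A$ that all $k$ vertices of $h$ land in $B_i$, and, conditioned on that, the event that no other hyperedge $h'\in E\setminus\{h\}$ is fully contained in $B_i$. Since the vertex placements are independent $\mathrm{Bernoulli}(r_{inc})$ draws, $\Pr[A]=r_{inc}^k$, and conditioning on $A$ merely fixes the indicators of the vertices of $h$ — the indicators of all remaining vertices are still i.i.d.\ $\mathrm{Bernoulli}(r_{inc})$. So it suffices to lower bound $\Pr[\text{no other hyperedge fully contained in }B_i \mid A]$, which I would do by a union bound over $h'\neq h$.

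For the union bound I would classify each $h'\in E\setminus\{h\}$ by the overlap size $j:=|h'\cap h|\in\{0,1,\dots,k-1\}$. Conditioned on $A$, the $j$ vertices of $h'\cap h$ are already in $B_i$, so $h'$ is fully contained in $B_i$ precisely when its remaining $k-j$ vertices land in $B_i$, an event of probability $r_{inc}^{k-j}$ by independence. I then split the sum into (i) hyperedges with $j\geq 1$ and (ii) hyperedges with $j=0$. For group (i): every such $h'$ passes through some vertex of $h$, and each vertex of $h$ lies in at most $\Delta(G)$ hyperedges, so there are at most $k\Delta(G)$ of them; moreover $1\leq k-j\leq k-1$, so $r_{inc}^{k-j}\leq r_{inc}$ (using $r_{inc}<1$), and group (i) contributes at most $k\Delta(G)\,r_{inc}$. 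For group (ii): there are at most $m(G)$ such hyperedges, each contributing $r_{inc}^k$, for a total of at most $m(G)\,r_{inc}^k$.

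Combining, $\Pr[\text{some other hyperedge fully contained in }B_i \mid A]\leq k\Delta(G)\,r_{inc}+m(G)\,r_{inc}^k$, and therefore
\[
    \Pr[\mathcal{E}_{h,i}]=\Pr[A]\cdot\Pr[\text{no other hyperedge fully contained in }B_i \mid A]\geq r_{inc}^k\left(1-k\Delta(G)\,r_{inc}-m(G)\,r_{inc}^k\right),
\]
which is the claimed inequality (reading the ``$r$'' in the statement as $r_{inc}$).

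I do not expect a genuine obstacle: the argument is a single conditioning step followed by a union bound. The points that need a little care are the conditional-independence claim — that conditioning on all of $h$'s vertices being in $B_i$ leaves the other vertices' indicators i.i.d.\ $\mathrm{Bernoulli}(r_{inc})$ — and the bookkeeping in the overlap case analysis (in particular that $r_{inc}^{k-j}\leq r_{inc}$ relies on $r_{inc}<1$, and that crude over-counting, e.g.\ not excluding $h$ itself from the ``passes through a vertex of $h$'' tally, is harmless since only upper bounds are needed). It is also worth flagging that the bound is vacuous unless $k\Delta(G)\,r_{inc}+m(G)\,r_{inc}^k<1$, which is exactly the regime in which the lemma will be applied in the analysis of Algorithm~\ref{alg:grotesque}.
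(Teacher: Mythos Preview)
Your proposal is correct and follows essentially the same argument as the paper: condition on $h\subseteq B_i$ (probability $r_{inc}^k$), then union-bound over all other hyperedges, splitting into those that intersect $h$ (at most $k\Delta(G)$ of them, each with conditional probability at most $r_{inc}$) and those disjoint from $h$ (at most $m(G)$ of them, each with probability $r_{inc}^k$). If anything, your write-up is slightly more explicit than the paper's about why $r_{inc}^{k-j}\leq r_{inc}$ and about the conditional-independence step.
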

\begin{proof}
We consider three events $A_0$, $A_1$ and $A_2$ defined as follows:
\begin{itemize}
    \item $A_0$ is the event that the hyperedge $h$ is fully contained in $B_i$,
    \item $A_1$ is the event that there exists some hyperedge $h'\neq h$ satisfying $h' \cap h \neq \emptyset$ that is fully contained in $B_i$,
    \item $A_2$ is the event that there exists some hyperedge $h'$ satisfying  $h' \cap h = \emptyset$ that is fully contained within $B_i$.
\end{itemize}
By definition, we have $\mathcal{E}_{h,i} = A_0 \cap \overline{A}_1 \cap \overline{A}_2$. Note that:
\[
    \Pr[A_0] = r_{inc}^k,
\]
while by the union bound, we have:
\[
    \Pr[{A_1} \mid A_0] \leq \sum_{\substack{h'\in E\setminus\{h\}\\ h' \cap h \neq \emptyset}} r_{inc} \leq r_{inc}k\Delta(G),
\]
and:
\[
    \Pr[{A_2} \mid A_0] = \Pr[A_2] \leq \sum_{\substack{h' \in E\\ h'\cap h = \emptyset}} r_{inc}^k = m(G) r_{inc}^k. 
\]
We then have:
\begin{align*}
    \Pr[\mathcal{E}_{h,i}] &= \Pr[A_0 \cap \bar{A}_1 \cap \bar{A}_2] = \Pr[\bar{A}_1 \cap \bar{A}_2\mid A_0] \Pr[A_0]\\
    &= \left(1- \Pr[{A}_1 \cup {A}_2\mid A_0]\right) r_{inc}^k\\
    &\geq \left(1- \Pr[{A}_1\mid A_0] - \Pr[{A}_2\mid A_0]\right) r_{inc}^k\\
    &=\left(1- r_{inc}k\Delta(G)- m(G)r^k\right) r_{inc}^k.
\end{align*}
\end{proof}

This in turn gives the following result.

\begin{lemma}\label{lem:bundles-guarantees} 
When the \texttt{HYPERGRAPH-GROTESQUE} algorithm is run on a hypergraph $G$ sampled according to an Erd\H{o}s-R\'enyi model for sufficiently large values of $n$, the probability that every hyperedge $h$ is the unique hyperedge in some bundle of tests satisfies:
\[
    \Pr\left[\bigcap_{h\in E}\bigcup_{i\in[b]} \mathcal{E}_{h,i}\right] \geq 1-\delta.
\] 
\end{lemma}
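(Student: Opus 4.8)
The strategy is a union-bound-over-hyperedges argument combined with the per-bundle estimate from Lemma~\ref{lemma:hyperedge-is-unique-in-bundle}, all conditioned on the event that $G$ lies in the $\varepsilon$-typical set $\mathcal{T}(\varepsilon)$ of Lemma~\ref{lem:typical-set}. First I would fix a typical $G$, so that $m(G) = \Theta(\bar m)$ and $\Delta(G) \le d_{\max}$. With $r_{inc} = (2m)^{-1/k}$, we have $r_{inc}^k = \tfrac{1}{2m}$ and $m(G) r_{inc}^k = \tfrac12$. The cross-intersection term is $r_{inc} k \Delta(G) = k\Delta(G)(2m)^{-1/k}$, which under the sparsity assumptions ($q = \Theta(n^{-k(1-\theta)})$, hence $\bar m = \Theta(n^{k\theta})$, and $d_{\max}$ either $\Theta(n^{k-1}q) = \Theta(n^{k\theta-1})$ or $\Theta(\log n)$) I would check is $o(1)$ as $n\to\infty$; this is the place where the sparsity hypotheses are actually consumed. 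Plugging into Lemma~\ref{lemma:hyperedge-is-unique-in-bundle} gives, for $n$ large, $\Pr[\mathcal E_{h,i}] \ge \left(\tfrac12 - o(1)\right) r_{inc}^k \ge \tfrac14 \cdot \tfrac{1}{2m} = \Theta(1/\bar m)$ uniformly over $h$ and $i$.

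Next, since the bundles $B_1,\dots,B_b$ are formed by independent vertex inclusions, the events $\mathcal E_{h,1},\dots,\mathcal E_{h,b}$ are independent across $i$ for fixed $h$, so
\[
    \Pr\!\left[\bigcap_{i\in[b]} \overline{\mathcal E_{h,i}}\right] = \prod_{i=1}^{b}\left(1 - \Pr[\mathcal E_{h,i}]\right) \le \left(1 - \frac{c}{\bar m}\right)^{b} \le \exp\!\left(-\frac{cb}{\bar m}\right)
\]
for an absolute constant $c>0$. Taking a union bound over the $m = \Theta(\bar m)$ hyperedges,
\[
    \Pr\!\left[\bigcup_{h\in E}\bigcap_{i\in[b]} \overline{\mathcal E_{h,i}}\right] \le m \exp\!\left(-\frac{cb}{\bar m}\right) = \Theta(\bar m)\exp\!\left(-\frac{cb}{\bar m}\right),
\]
and choosing $b = \Theta(\bar m \log \bar m)$ with a large enough constant makes this at most $\delta$ (in fact $o(1)$, so any fixed $\delta>0$ is fine). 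Finally I would remove the conditioning on $G \in \mathcal{T}(\varepsilon)$: since $\Pr[G \notin \mathcal{T}(\varepsilon)] \to 0$ by Lemma~\ref{lem:typical-set}, the overall failure probability is at most $\delta + o(1)$, which can be absorbed into $\delta$ for large $n$.

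The main obstacle is the bookkeeping in the first paragraph: verifying that $r_{inc} k \Delta(G) = o(1)$ in \emph{both} regimes of $d_{\max}$ under the stated sparsity, and being careful that $r_{inc}$ depends on the random variable $m$ (not $\bar m$) so that the estimates should really be stated for $G \in \mathcal{T}(\varepsilon)$ where $m = \Theta(\bar m)$; one must also make sure the constant hidden in $b = \Theta(\bar m\log\bar m)$ is chosen after the constant $c$ is pinned down. Everything else is a routine independence-plus-union-bound calculation.
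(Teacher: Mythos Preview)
Your proposal is correct and follows essentially the same route as the paper: invoke Lemma~\ref{lemma:hyperedge-is-unique-in-bundle} to get $\Pr[\mathcal{E}_{h,i}] \ge \tfrac{1}{4m}(1+o(1))$ after checking $r_{inc}k\Delta(G)=o(1)$ under the sparsity hypotheses, then use independence across bundles together with a union bound over the $m$ hyperedges to obtain $m\,e^{-b/(4m)(1+o(1))}$, which is at most $\delta$ for $b=\Theta(\bar m\log\bar m)$. If anything you are slightly more careful than the paper, which substitutes $\Delta(G)\lesssim km/n$ and invokes $n=\omega(m^{1-1/k})$ directly rather than separating the two $d_{\max}$ regimes, and which leaves the conditioning on typicality implicit.
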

\begin{proof}
Let $m = m(G)$. For every fixed $h$ and $i$, by Lemma~\ref{lemma:hyperedge-is-unique-in-bundle}:
\begin{align}\label{eq:bound-on-Ehb}
     \Pr[\mathcal{E}_{h,i}] &\geq \left(1- r_{inc}k\Delta(G)- mr_{inc}^k\right) r_{inc}^k \nonumber \\
     &= \left(1- r_{inc}{k^2m\over n}- {1\over 2}\right) {1\over 2m} \nonumber \\ 
     &= {1\over 4m} (1+o(1)),
\end{align}
where we are using the fact that $n = \omega(m^{1-{1\over k}})$. Hence:
\begin{align*}
    \Pr\left[\overline{\bigcap_{h\in E}\bigcup_{i\in[b]} \mathcal{E}_{h,i}}\right] &= \Pr\left[\bigcup_{h\in E}\bigcap_{i\in[b]} \overline{\mathcal{E}_{h,i}}\right] \\
    &\leq  \sum_{h\in E} \Pr\left[\bigcap_{i\in[b]} \overline{\mathcal{E}_{h,i}}\right] \\
    &= \sum_{h\in E} \prod_{i\in[b]} \Pr\left[\,\overline{\mathcal{E}_{h,i}}\,\right] \\
    &\leq \sum_{h\in E} \prod_{i\in[b]} \left(1-{1\over 4m} (1+ o(1))\right) \\
    &= m \left(1-{1\over 4m} (1+ o(1))\right)^b \\
    &\leq m e^{-{b\over 4m} (1+o(1))},
\end{align*}
where we first applied De Morgan's law, then the union bound, then the fact that for every $h$ the random variables $\{\mathcal{E}_{h,i}\}_{i\in[b]}$ are mutually independent, then Equation~\eqref{eq:bound-on-Ehb}. The result then follows.
\end{proof}

\subsection{Multiplicity Test}\label{ssec:multiplicty-test}
We now discuss the guarantees of the multiplicity test.
\begin{definition}
    Given a set $B \subseteq V$, a $(r_{mul},t_{mul})$-multiplicity test for $B$ is a collection of $t_{mul}$ tests on the elements of $B$ chosen according to a Bernoulli design with parameter $r_{mul}$. The test returns $1$ if the fraction of positive tests suggests that a single hyperedge is present in the bundle and $0$ otherwise.
\end{definition}

In order to analyze the multiplicity test (Algorithm~\ref{alg:mult-test}), we use the following lemma, which we prove in Section~\ref{sec:missing-proofs-grotesque} of the Supplementary Materials.
\begin{restatable}{lemma}{detectionProbability}\label{lem:detection-probability}
    Suppose that a set $B \subseteq V$ contains multiple hyperedges. Let $S$ be a subset chosen according to a Bernoulli design with parameter $1/\sqrt[k]{e}$ (i.e. by including each $v \in B$ into $S$ independently with probability $1/\sqrt[k]{e}$). Then the probability that $S$ contains a full hyperedge is at least:
    $
        {2 / e} - {1 / e^{(k+1)/k}}.
    $
\end{restatable}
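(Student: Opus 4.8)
The plan is to reduce to the case of exactly two hyperedges and then apply inclusion--exclusion. Since $B$ contains more than one hyperedge, fix any two distinct hyperedges $h_1, h_2 \in \binom{B}{k}$. Because ``$S$ contains a full hyperedge'' is a monotone event and $\{h_1 \subseteq S\}\cup\{h_2\subseteq S\}$ is a sub-event of it, it suffices to lower bound $\Pr[\{h_1 \subseteq S\}\cup\{h_2\subseteq S\}]$.

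First I would compute the two marginal probabilities: each $h_j$ is a set of exactly $k$ vertices, each included in $S$ independently with probability $1/\sqrt[k]{e}$, so $\Pr[h_j \subseteq S] = (1/\sqrt[k]{e})^k = 1/e$ for $j=1,2$. Next I would bound the intersection term: $h_1 \subseteq S$ and $h_2 \subseteq S$ hold simultaneously if and only if $h_1 \cup h_2 \subseteq S$, an event of probability $(1/\sqrt[k]{e})^{|h_1\cup h_2|}$. Since $h_1 \ne h_2$ and $|h_1| = |h_2| = k$, we have $|h_1 \cup h_2| \ge k+1$, and since $1/\sqrt[k]{e} < 1$ this probability is at most $(1/\sqrt[k]{e})^{k+1} = e^{-(k+1)/k}$.

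Putting these together with inclusion--exclusion gives
\[
    \Pr\big[\{h_1 \subseteq S\}\cup\{h_2\subseteq S\}\big] = \Pr[h_1\subseteq S] + \Pr[h_2 \subseteq S] - \Pr[h_1\cup h_2 \subseteq S] \ge \frac{2}{e} - \frac{1}{e^{(k+1)/k}},
\]
and the left-hand side lower bounds the probability that $S$ contains a full hyperedge, which is exactly the claim. I do not anticipate any real obstacle here: the only point requiring care is the direction of the bound on the intersection term — we need to \emph{upper} bound it, so we take the worst case $|h_1 \cup h_2| = k+1$ (two hyperedges sharing $k-1$ vertices), which is also precisely the configuration making the stated constant tight.
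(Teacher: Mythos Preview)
Your proof is correct and follows essentially the same approach as the paper: fix two distinct hyperedges, lower bound the detection probability by $\Pr[h_1\subseteq S\text{ or }h_2\subseteq S]$, and take the worst case $|h_1\cup h_2|=k+1$. The only cosmetic difference is that the paper phrases the computation via conditioning on $h_1\cap h_2\subseteq S$ and conditional independence, whereas you apply inclusion--exclusion directly; both yield $2/e - e^{-(2k-a)/k}$ with $a=|h_1\cap h_2|\le k-1$.
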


This then yields the following guarantee on correctness, also proved in Section~\ref{sec:missing-proofs-grotesque}:
\begin{restatable}{lemma}{MultiplicityTestIsCorrect}\label{lemma:multiplicity-test-is-correct}
    Suppose we run a multiplicity test on a bundle $B$ with error probability parameter $\delta$. Then:
    \begin{enumerate}
        \item if $B$ contains no hyperedge, the answer is always $0$,
        \item if $B$ contains a single hyperedge, the test returns $1$ with probability at least $1-\delta$,
        \item and if $B$ contains more than one hyperedge, the test returns $0$ with probability at least $1-\delta$.
    \end{enumerate}  
\end{restatable}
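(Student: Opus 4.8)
The plan is to analyze the three cases separately, in each case identifying the distribution of the random fraction $\hat{p}$ of positive Bernoulli tests computed inside Algorithm~\ref{alg:mult-test}, and then arguing that the acceptance region $(0, 1/e + M/2)$ correctly separates the three scenarios with the stated failure probability. Recall that the test performs $t_{mul} = 2\log(2/\delta)/M^2$ Bernoulli queries with inclusion probability $r_{mul} = 1/\sqrt[k]{e}$, and returns $1$ exactly when $\hat{p} \in (0, 1/e + M/2)$, where $M = \frac{1}{e}(1 - \frac{1}{\sqrt[k]{e}})$.

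\textbf{Case 1 (no hyperedge).} If $B$ contains no hyperedge, then every Bernoulli test on $B$ is negative with certainty, so $\hat{p} = 0$ deterministically. Since $0 \notin (0, 1/e + M/2)$, the test returns $0$ always. This case needs no probabilistic argument.

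\textbf{Case 2 (exactly one hyperedge).} If $B$ contains a single hyperedge $h$, a Bernoulli test with parameter $1/\sqrt[k]{e}$ is positive precisely when all $k$ vertices of $h$ are included, which happens with probability exactly $(1/\sqrt[k]{e})^k = 1/e$. Thus $t_{mul}\hat{p}$ is $\mathrm{Binomial}(t_{mul}, 1/e)$, so $\mathbb{E}[\hat{p}] = 1/e$. The test returns $1$ unless $\hat{p} = 0$ or $\hat{p} \geq 1/e + M/2$. I would bound $\Pr[\hat{p} = 0] \leq \Pr[|\hat{p} - 1/e| \geq M/2]$ and $\Pr[\hat{p} \geq 1/e + M/2] \leq \Pr[|\hat{p} - 1/e| \geq M/2]$, and then apply a Hoeffding bound: $\Pr[|\hat{p} - 1/e| \geq M/2] \leq 2\exp(-2(M/2)^2 t_{mul}) = 2\exp(-M^2 t_{mul}/2) = 2\exp(-\log(2/\delta)) = \delta$. (One must check that the constant $M/2$ is small enough that $1/e + M/2$ does not overshoot the threshold separating this case from the two-hyperedge case — this is where the precise value of $M$ enters, and it is the one spot requiring care.)

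\textbf{Case 3 (two or more hyperedges).} By Lemma~\ref{lem:detection-probability}, a Bernoulli test with parameter $1/\sqrt[k]{e}$ on $B$ is positive with probability $p^* \geq 2/e - 1/e^{(k+1)/k}$; call this true probability $p_B \geq p^*$. A short computation shows $p^* = 1/e + M$ (indeed $2/e - e^{-1}\cdot e^{-1/k} = 1/e + \frac{1}{e}(1 - e^{-1/k}) = 1/e + M$), so $p_B \geq 1/e + M$. The test returns $0$ unless $\hat{p} \in (0, 1/e + M/2)$; since $\hat{p} > 0$ is not the obstacle here, it suffices to bound $\Pr[\hat{p} < 1/e + M/2]$. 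Because $\hat{p}$ concentrates around $p_B \geq 1/e + M$, we have $\Pr[\hat{p} < 1/e + M/2] \leq \Pr[\hat{p} < p_B - M/2] \leq \Pr[|\hat{p} - p_B| \geq M/2] \leq 2\exp(-M^2 t_{mul}/2) = \delta$ by the same Hoeffding bound. Here $\hat{p}$ is a sum of independent $\{0,1\}$ variables (one per query, each positive with probability $p_B$), so Hoeffding applies directly.

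The main obstacle is really the bookkeeping in establishing that $M$ is chosen so that the single point estimate $1/e$ (one hyperedge) and the lower bound $1/e + M$ (multiple hyperedges) are separated by a margin of exactly $M$, with the decision threshold placed at the midpoint $1/e + M/2$; once that algebraic identity $2/e - e^{-(k+1)/k} = 1/e + M$ is verified, all three cases reduce to a single Hoeffding inequality with the parameters chosen precisely so that the right-hand side equals $\delta$. I would also remark that $M > 0$ for all $k \geq 1$ (since $\sqrt[k]{e} > 1$), so the test is well-defined, and that $1/e + M/2 < 1$, so the acceptance interval is nonempty.
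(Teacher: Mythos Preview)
Your proposal is correct and follows essentially the same approach as the paper: treat the three cases separately, identify the mean of $\hat{p}$ in each, and apply Hoeffding's inequality with deviation $M/2$ so that the choice $t_{mul}=2\log(2/\delta)/M^2$ yields failure probability at most $\delta$. Your write-up is in fact more careful than the paper's, since you explicitly verify the identity $2/e - e^{-(k+1)/k} = 1/e + M$ that places the decision threshold exactly at the midpoint between the one-hyperedge mean and the multi-hyperedge lower bound.
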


In the same section, we also obtain the following guarantee on the efficiency of the multiplicity tests:
\begin{restatable}{lemma}{MultiplicityTestIsEfficient}\label{lemma:multiplicity-test-is-efficient}
    The number of queries made by a multiplicity test with error parameter $\delta$ is at most $e^3 k \log {2\over \delta}$, and the decoding time for each multiplicity test is $O(k \log {1\over \delta})$.
\end{restatable}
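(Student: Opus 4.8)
The plan is to obtain both bounds essentially by unwinding the definition of the multiplicity test in Algorithm~\ref{alg:mult-test}, the only quantitative ingredient being an elementary estimate for $1-1/\sqrt[k]{e}$. This lemma is the routine efficiency companion to the correctness guarantees (Lemmas~\ref{lem:detection-probability} and~\ref{lemma:multiplicity-test-is-correct}), so no probabilistic argument is needed.

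For the query complexity: by construction, one invocation of the multiplicity test on a bundle $B$ with error parameter $\delta$ issues exactly $t_{mul}=2\log(2/\delta)/M^{2}$ edge-detection queries, where $M=\tfrac1e\bigl(1-1/\sqrt[k]{e}\bigr)$. Hence it suffices to lower bound $M$, equivalently to lower bound $1-e^{-1/k}$. Writing $x=1/k\in(0,1]$ and using the standard inequalities $1-e^{-x}\ge xe^{-x}\ge x/e$ (the first because both sides vanish at $x=0$ and the left-hand derivative dominates on $[0,1]$, the second because $e^{-x}\ge e^{-1}$ there), one gets $1-1/\sqrt[k]{e}\ge 1/(ek)$, so that $M=\Omega(1/k)$. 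Substituting this estimate back into $t_{mul}=2\log(2/\delta)/M^{2}$ and keeping track of the absolute constants yields the stated bound on the number of queries.

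For the decoding time: once the $t_{mul}$ query answers have been received, the test does nothing but count how many of them are positive, form the empirical fraction $\hat p$, and test membership of $\hat p$ in the fixed interval $(0,\,1/e+M/2)$; this is a single linear pass over $t_{mul}$ bits together with $O(1)$ arithmetic, hence $O(t_{mul})$ time, which is of the same order as the query count. I do not expect a genuine obstacle here: the only point that needs a little care is making the constant in the elementary estimate $1-1/\sqrt[k]{e}=\Theta(1/k)$ explicit, so that the claimed absolute constants come out cleanly.
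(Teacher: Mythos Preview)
Your approach is exactly the paper's: read off $t_{mul}=2\log(2/\delta)/M^{2}$ from Algorithm~\ref{alg:mult-test}, bound $1-e^{-1/k}$ below by a multiple of $1/k$, and observe that decoding is a linear scan over the $t_{mul}$ answers.

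However, the step ``keeping track of the absolute constants yields the stated bound'' is where the argument breaks, and you should not gloss over it. Your (correct) estimate gives $M\ge 1/(e^{2}k)$, hence
\[
t_{mul}\;=\;\frac{2\log(2/\delta)}{M^{2}}\;=\;\frac{2e^{2}\log(2/\delta)}{(1-e^{-1/k})^{2}}\;=\;\Theta\!\bigl(k^{2}\log(2/\delta)\bigr),
\]
not $O(k\log(2/\delta))$. Concretely, for $k=2$ one has $(1-e^{-1/2})^{2}\approx 0.155$, so $t_{mul}\approx 95.4\,\log(2/\delta)$, already exceeding $e^{3}k\log(2/\delta)\approx 40.2\,\log(2/\delta)$. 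The paper's own one-line proof makes the same slip: it writes $t_{mul}=2e^{2}\log(2/\delta)\cdot e^{2/k}/(\sqrt[k]{e}-1)$, silently dropping the square on $\sqrt[k]{e}-1$ in the denominator, and then bounds that (incorrect) expression by $e^{3}k\log(2/\delta)$. So the lemma as stated is off by a factor of $k$; the honest bound your method delivers is $t_{mul}=O(k^{2}\log(1/\delta))$, and likewise for the decoding time. This does not affect the headline Theorem~\ref{thm:main-grotesque-theorem}, since the multiplicity-test contribution is dominated by the location-test term there, but you should state the bound you can actually prove rather than the one claimed.
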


\subsection{Location Test via Reduction to Group Testing}\label{ssec:location-tests}
Once the algorithm has performed all the multiplicity tests, it runs location tests on the bundles that passed the multiplicity tests. Executing a location test on a bundle that contains a single hyperedge $h$ allows the algorithm to \emph{discover} $h$ and add it to the estimate hypergraph $\hat{H}$. 

We obtain a location test by highlighting an equivalence between the problem of learning a single hyperedge of arity $k$ using a hyperedge-detection oracle, and that of group testing with $k$ defective items.

\begin{lemma}\label{lem:location-test-is-group-test}
    Any algorithm for the group testing problem with $k$ faulty items yields an algorithm for the problem of learning a hypergraph known to have a single hyperedge of arity $k$ by making edge-detection queries. Conversely, any algorithm for the latter problem yields an algorithm for the former.
\end{lemma}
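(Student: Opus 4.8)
The plan is to exploit a simple complementation duality between the two oracles. Identify a common ground set of $n$ elements with the vertex set $V$ in the single-hyperedge learning problem and with the item set in the group testing problem. In the former, the hidden object is a set $h \subseteq V$ with $|h| = k$, and a query on $S \subseteq V$ returns $1$ iff $h \subseteq S$; in the latter, the hidden object is a set $D$ of exactly $k$ defective items, and a query on $T$ returns $1$ iff $T \cap D \neq \emptyset$. The elementary observation driving the reduction is that
\[
    h \subseteq S \iff (V \setminus S) \cap h = \emptyset,
\]
so a hyperedge-detection query on $S$ returns exactly the negation of the group-testing query (with defective set $h$) on the complement $V \setminus S$.

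First I would use this to turn any group testing algorithm $\mathcal{A}$ for $k$ defectives into an algorithm for learning a single hyperedge of arity $k$: run $\mathcal{A}$, but whenever $\mathcal{A}$ wishes to query a set $T$, instead query $V \setminus T$ with the hyperedge-detection oracle and feed $\mathcal{A}$ the negation of the bit returned. By the observation above, with $D = h$ the bits handed to $\mathcal{A}$ are precisely the answers it would have received in a genuine group-testing instance whose defective set is $h$; hence $\mathcal{A}$ outputs $h$, which we return. The converse direction is symmetric: given an algorithm $\mathcal{B}$ that learns a single hyperedge of arity $k$, simulate each of its hyperedge-detection queries on $S$ by a group-testing query on $V \setminus S$, again negating the answer; with defective set $D$ of size $k$, the run of $\mathcal{B}$ is identical to a run against a hypergraph whose unique hyperedge is $D$, so $\mathcal{B}$ recovers $D$.

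Then I would record that the transformation is \emph{static}: it replaces each query set by its complement and each answer bit by its negation, independently of all other queries and answers. Consequently it maps non-adaptive query designs to non-adaptive query designs, preserves the number of queries exactly, and adds only $O(\text{number of queries})$ overhead for the complementation/negation bookkeeping on top of whatever the decoder already costs. This is the form in which the lemma is used in Section~\ref{ssec:location-tests}, where the non-adaptive group testing scheme of \cite{cheraghchi2019simple} is plugged in as the location test.

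I do not expect a genuine obstacle here; the only points needing a word of care are (i) matching parameters, namely that the ground set has the same size $n$ on both sides and that ``$k$ faulty items'' corresponds to ``arity exactly $k$'' rather than ``at most $k$''; and (ii) verifying that the promise in each problem (exactly one hyperedge, resp.\ exactly $k$ defectives) is exactly what the other side requires, so that the simulated oracle always presents a legal instance and the borrowed algorithm's correctness and complexity guarantees transfer verbatim.
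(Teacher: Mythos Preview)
Your proposal is correct and essentially identical to the paper's own proof: both exploit the complementation duality $h \subseteq S \iff (V\setminus S)\cap h = \emptyset$, simulate one oracle by querying the complement set and negating the answer, and note that the reduction preserves query count, adaptivity, and runtime. The paper's presentation is slightly terser on the converse direction and on the ``static'' observation you spell out, but the argument is the same.
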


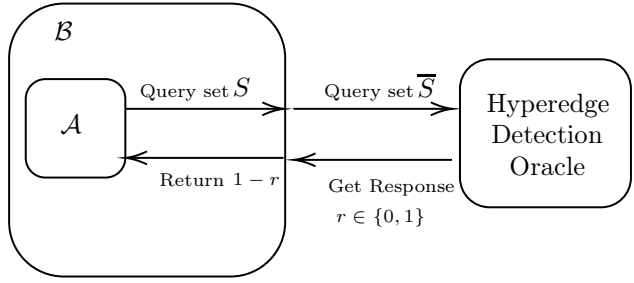
\begin{figure}
    \centering
    \tikzset{every picture/.style={line width=0.75pt}} %

\begin{tikzpicture}[x=0.75pt,y=0.75pt,yscale=-1,xscale=1]
\draw   (79,88.94) .. controls (79,83.45) and (83.45,79) .. (88.94,79) -- (119.29,79) .. controls (124.78,79) and (129.23,83.45) .. (129.23,88.94) -- (129.23,118.76) .. controls (129.23,124.25) and (124.78,128.7) .. (119.29,128.7) -- (88.94,128.7) .. controls (83.45,128.7) and (79,124.25) .. (79,118.76) -- cycle ;

\draw   (70.62,68.43) .. controls (70.62,53.22) and (82.95,40.88) .. (98.17,40.88) -- (182.3,40.88) .. controls (197.51,40.88) and (209.85,53.22) .. (209.85,68.43) -- (209.85,151.09) .. controls (209.85,166.3) and (197.51,178.64) .. (182.3,178.64) -- (98.17,178.64) .. controls (82.95,178.64) and (70.62,166.3) .. (70.62,151.09) -- cycle ;
\draw    (129,94) -- (207.23,94) ;
\draw [shift={(209.23,94)}, rotate = 180] [color={rgb, 255:red, 0; green, 0; blue, 0 }  ][line width=0.75]    (10.93,-3.29) .. controls (6.95,-1.4) and (3.31,-0.3) .. (0,0) .. controls (3.31,0.3) and (6.95,1.4) .. (10.93,3.29)   ;
\draw    (209,118.76) -- (131.23,118.76) ;
\draw [shift={(129.23,118.76)}, rotate = 360] [color={rgb, 255:red, 0; green, 0; blue, 0 }  ][line width=0.75]    (10.93,-3.29) .. controls (6.95,-1.4) and (3.31,-0.3) .. (0,0) .. controls (3.31,0.3) and (6.95,1.4) .. (10.93,3.29)   ;
\draw   (298,83.94) .. controls (298,75.69) and (304.69,69) .. (312.94,69) -- (369.29,69) .. controls (377.54,69) and (384.23,75.69) .. (384.23,83.94) -- (384.23,128.76) .. controls (384.23,137.01) and (377.54,143.7) .. (369.29,143.7) -- (312.94,143.7) .. controls (304.69,143.7) and (298,137.01) .. (298,128.76) -- cycle ;

\draw    (214,94) -- (292.23,94) ;
\draw [shift={(294.23,94)}, rotate = 180] [color={rgb, 255:red, 0; green, 0; blue, 0 }  ][line width=0.75]    (10.93,-3.29) .. controls (6.95,-1.4) and (3.31,-0.3) .. (0,0) .. controls (3.31,0.3) and (6.95,1.4) .. (10.93,3.29)   ;
\draw    (294,119.76) -- (216.23,119.76) ;
\draw [shift={(214.23,119.76)}, rotate = 360] [color={rgb, 255:red, 0; green, 0; blue, 0 }  ][line width=0.75]    (10.93,-3.29) .. controls (6.95,-1.4) and (3.31,-0.3) .. (0,0) .. controls (3.31,0.3) and (6.95,1.4) .. (10.93,3.29)   ;

\draw (95,95.4) node [anchor=north west][inner sep=0.75pt]    {$\mathcal{A}$};
\draw (182,77) node [anchor=north west][inner sep=0.75pt]    {$S$};
\draw (135,80) node [anchor=north west][inner sep=0.75pt]  [font=\scriptsize] [align=left] {Query set};
\draw (92,48.4) node [anchor=north west][inner sep=0.75pt]    {$\mathcal{B}$};
\draw (310,78) node [anchor=north west][inner sep=0.75pt]   [align=left] {\begin{minipage}[lt]{45.83pt}\setlength\topsep{0pt}
\begin{center}
Hyperedge \\Detection\\Oracle
\end{center}

\end{minipage}};
\draw (275,75) node [anchor=north west][inner sep=0.75pt]    {$\overline{S}$};
\draw (228,80) node [anchor=north west][inner sep=0.75pt]  [font=\scriptsize] [align=left] {Query set};
\draw (230,128) node [anchor=north west][inner sep=0.75pt]  [font=\scriptsize] [align=left] {Get Response};
\draw (234,142.4) node [anchor=north west][inner sep=0.75pt]  [font=\scriptsize]  {$r\in \{0,1\}$};
\draw (145,125) node [anchor=north west][inner sep=0.75pt]  [font=\scriptsize] [align=left] {Return};
\draw (182,125) node [anchor=north west][inner sep=0.75pt]  [font=\scriptsize]  {$1-r$};

\end{tikzpicture}
    \caption{The structure of the reduction in the proof of Lemma~\ref{lem:location-test-is-group-test}. Here, the algorithm $\mathcal{B}$ is given access to a hyperedge-detection oracle. $\mathcal{B}$ simulates algorithm $\mathcal{A}$ and converts $\mathcal{A}$'s queries into hyperedge detection queries.}
    \label{fig:reduction-diagram}
\end{figure}
\begin{proof}
Consider the following reduction from the latter problem to the former. Suppose $\mathcal{A}$ is an algorithm that solves the group testing problem: i.e. given a finite set $V$ which contains a subset $K$ of defective items, $\mathcal{A}$ submits queries of the form $S \subseteq V$ to an oracle that determines whether $S\cap K \neq \emptyset$, and based on the answer to those queries, it recovers $K$.

Now, consider the problem of learning a cardinality-$k$ hyperedge $h$ on $V$ by making hyperedge-detection queries. We design an algorithm $\mathcal{B}$ for the latter problem as follows: $\mathcal{B}$ simulates $\mathcal{A}$ and whenever $\mathcal{A}$ submits a query $S \subseteq V$, $\mathcal{B}$ instead submits the query $\overline{S}$ to the hyperedge-detection oracle, and then returns to $\mathcal{A}$ the opposite ($1-r$) of the answer $r$ it receives. When $\mathcal{A}$ terminates, outputting a set $S^*$, $\mathcal{B}$ outputs the same set.

For each query $S$ made by $\mathcal{A}$ the value of $1-r$ is equal to 1 if and only if the set $S$ contains at least one element of the hidden hyperedge $h$. Hence, from the perspective of $\mathcal{A}$, $\mathcal{B}$ is implementing a group testing oracle for an instance in which $K=h$. In particular, if $\mathcal{A}$ correctly solves the group testing problem, the output $S^*$ of $\mathcal{B}$ is equal to $h$.

It is easy to see that an analogous reduction can be used to reduce from the group testing problem to that of learning a single hyperedge, and hence the two problems are entirely equivalent.
\end{proof}

Note that this reduction preserves query complexity, adaptivity, and runtime guarantees.

The group testing problem is well-studied in the literature and in particular the following result is known.

\begin{theorem}[Paraphrasing Theorem 11 from \cite{cheraghchi2019simple}]
    Consider the standard group testing problem on $n$ element with $k$ defective elements. There exists a(n explicitly constructable) collection of $O(k^2 \log^2 n)$ group tests and an algorithm $\mathcal{A}$ which, given the results of the tests as input, outputs the set of defective items in $O(k^3 \log^2 n)$ time.
\end{theorem}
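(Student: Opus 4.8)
The plan is to build an explicit non-adaptive test matrix that is $k$-disjunct, so that correctness of group testing is automatic, and then equip it with a fast decoder. I would use the Kautz--Singleton code-concatenation construction: take an outer Reed--Solomon code over a field of size $Q = \Theta(k\log n)$, of length $N = Q$ and dimension $D = \Theta(N/k)$, chosen so that it has at least $n$ codewords and relative distance strictly larger than $1 - 1/k$. Compose this with the trivial inner map sending a field element $a$ to its indicator vector $e_a \in \{0,1\}^Q$. The resulting binary matrix $M$ has $t = NQ = O(k^2\log^2 n)$ rows, and we label the $n$ items by $n$ distinct columns (concatenated codewords).

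Next I would verify $k$-disjunctness. Any two distinct outer codewords agree in fewer than $N/k$ coordinates, so a single codeword agrees with $k$ fixed others \emph{simultaneously} in fewer than $N$ coordinates; passing to the binary encoding, the support of any column is not contained in the union of the supports of any $k$ other columns. This is exactly the $k$-disjunct property, and the usual argument then shows that the (at most $k$) defective items are recovered by the rule ``item $j$ is defective iff every test containing $j$ is positive''.

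For the decoding time I would avoid the naive $\Theta(nt)$ scan over all columns by exploiting the code structure. For each outer coordinate $p \in [N]$, let $A_p$ be the set of field elements $a$ for which test $(p,a)$ returned positive; in the noiseless setting $A_p$ is exactly the set of symbols the defective codewords take at coordinate $p$, so $|A_p| \le k$, and the defective items are precisely the codewords $c$ with $c_p \in A_p$ for every $p$. Finding these codewords is an instance of Reed--Solomon list recovery with input lists of size $k$ and full agreement, which the Guruswami--Sudan algorithm solves (its solvability threshold $N > kD$ coincides with the distance requirement above), returning a list of size $O(k)$ in time polynomial in $N$ and $Q$; a careful choice of the parameters $N,D,Q$ and of the RS-decoding routine brings this to $O(k^3\log^2 n)$. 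A final pass checking each of the $O(k)$ candidate codewords against the negative tests discards any spurious codeword.

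I expect this last step to be the main obstacle: simultaneously tuning the outer-code parameters so that list recovery returns a list of only $O(k)$ items with no false positives, and accounting the running time of the list-recovery routine tightly enough to obtain $O(k^3\log^2 n)$ rather than a larger polynomial. By contrast, the construction and the disjunctness argument (the first two steps) are routine once the parameters are fixed.
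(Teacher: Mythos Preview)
The paper does not prove this theorem at all: it is quoted verbatim as a black-box result from \cite{cheraghchi2019simple}, with only the one-line remark that ``the result of Cheraghchi and Ribeiro is based on a construction of linear codes with fast decoding time described in the same paper.'' There is nothing in the present paper to compare your proposal against beyond that citation.

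That said, your outline is essentially the approach Cheraghchi and Ribeiro themselves take, so you are reconstructing the cited proof rather than diverging from it. The Kautz--Singleton concatenation with the stated parameters does yield a $k$-disjunct matrix with $O(k^2\log^2 n)$ rows, and the decoding-by-list-recovery idea is exactly the mechanism they exploit to avoid the naive $\Theta(nt)$ scan. Your caveat about the last step is well placed: the delicate part of \cite{cheraghchi2019simple} is precisely engineering the outer decoder (they use a tailored list-recovery routine rather than invoking Guruswami--Sudan off the shelf) so that the running time lands at $O(k^3\log^2 n)$ and the output list contains no spurious items. Since the present paper treats the theorem as an imported tool, you are not expected to carry out that accounting here; citing \cite{cheraghchi2019simple} is what the authors do, and it suffices.
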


The result of Cheraghchi and Ribeiro is based on a construction of linear codes with fast decoding time described in the same paper.

By Lemma~\ref{lem:location-test-is-group-test}, the above result implies:
\begin{corollary}\label{cor:good-location-test-exists}
    Consider the problem of learning a hypergraph known to consist of a single hyperedge of arity $k$ by non-adaptively making queries to a hyperedge-detection oracle. There exists an algorithm for this problem which makes $O(k^2\log^2 n)$ queries and requires decoding time $O(k^3 \log^2 n)$.
\end{corollary}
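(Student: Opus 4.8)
The plan is to apply Lemma~\ref{lem:location-test-is-group-test} together with the paraphrased Theorem 11 of~\cite{cheraghchi2019simple} as a black box. The corollary asserts that learning a single hyperedge of arity $k$ on $n$ vertices via hyperedge-detection queries can be done with $O(k^2 \log^2 n)$ queries and $O(k^3 \log^2 n)$ decoding time, and these are exactly the bounds from the group testing theorem, so the only real content is verifying that the reduction in Lemma~\ref{lem:location-test-is-group-test} carries these resource bounds over without loss.

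First I would instantiate the group testing problem from Theorem 11 on a ground set of size $n$ with the number of defectives set to $k$. This produces an explicit collection of $O(k^2 \log^2 n)$ tests and a decoder $\mathcal{A}$ running in $O(k^3 \log^2 n)$ time. Next I would invoke the "latter-to-former" direction of Lemma~\ref{lem:location-test-is-group-test}: an algorithm for group testing with $k$ faulty items yields an algorithm for learning a hypergraph known to have a single hyperedge of arity $k$ via hyperedge-detection queries. Concretely, following the construction in the proof of that lemma (the algorithm $\mathcal{B}$ that simulates $\mathcal{A}$, replaces each query set $S$ by its complement $\overline{S}$, and flips the oracle's answer), each group test becomes exactly one hyperedge-detection query, and the decoder for the hyperedge problem is just $\mathcal{A}$ run on the flipped answers. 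Since complementing a set and flipping a bit are $O(1)$-per-element (or $O(n)$ total, which is dominated) operations and add nothing to the asymptotic decoding time, we get a query complexity of $O(k^2 \log^2 n)$ and a decoding time of $O(k^3 \log^2 n)$, as claimed.

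The one point that needs a sentence of care is non-adaptivity: the corollary claims the queries can be made non-adaptively. This follows because the group testing construction of Theorem 11 is itself non-adaptive — the collection of $O(k^2 \log^2 n)$ tests is fixed and explicitly constructable in advance, independent of the oracle's answers — and the reduction is purely syntactic (each test is mapped to a fixed complemented set before any answer is seen), so the resulting collection of hyperedge-detection queries is likewise fixed in advance. Hence the whole location test is a single non-adaptive batch, which is what Algorithm~\ref{alg:grotesque} requires.

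I do not anticipate a genuine obstacle here, since all the heavy lifting is in Lemma~\ref{lem:location-test-is-group-test} and in the cited result of Cheraghchi and Ribeiro; the proof of the corollary is essentially a one-line composition. If anything, the only thing to be slightly careful about is that the group testing result is stated for \emph{exactly} $k$ defectives (matching the promise that the bundle contains exactly one hyperedge, hence exactly $k$ "defective" vertices), so no padding or upper-bound-on-$k$ argument is needed; the promise on the bundle is precisely the promise the group testing decoder expects.
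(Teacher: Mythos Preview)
Your proposal is correct and matches the paper's own argument essentially verbatim: the paper simply states that the algorithm is the group testing procedure of \cite{cheraghchi2019simple} run through the reduction of Lemma~\ref{lem:location-test-is-group-test}, having already noted that this reduction preserves query complexity, adaptivity, and runtime. Your additional remarks on non-adaptivity and the exact-$k$ promise are fine elaborations but add nothing beyond what the paper takes for granted.
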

The algorithm guaranteed by Corollary~\ref{cor:good-location-test-exists} is simply the group testing algorithm of~\cite{cheraghchi2019simple} run through the reduction used in the proof of Lemma~\ref{lem:location-test-is-group-test}.

\subsection{Proof of Theorem~\ref{thm:main-grotesque-theorem}}

\begin{proof}[{\bf Proof of Theorem~\ref{thm:main-grotesque-theorem}}]
    By Lemma~\ref{lem:bundles-guarantees}, if we create $b = \Theta(\bar{m}\log \bar{m})$ bundles of vertices, and assign each vertex to a bundle at random with probability $r_{inc}={1/\sqrt[k]{2\bar{m}}}$, every hyperedge is the unique hyperedge in some bundle with constant probability. 
    
    The algorithm then runs a multiplicity test with error probability $\delta^* = \Theta\left({1\over \bar{m}\log \bar{m}}\right)$ on every bundle. By Lemma~\ref{lemma:multiplicity-test-is-correct} and the union bound, there is a constant probability that every multiplicity test succeeds. 
    
    By Lemma~\ref{lemma:multiplicity-test-is-efficient}, this requires $O(k\log \bar{m} )$ queries and $O(k\log \bar{m})$ decoding time for every bundle, which amounts to a total of $O(k\bar{m}\log^2\bar{m})$ queries and decoding time to establish which bundles contain a single hyperedge. We then need to run $\bar{m}\log \bar{m}$ location tests\footnote{Note that prior to running a location test on a bundle $B$, the algorithm can check whether $B$ contains a previously discovered hyperedge $h$. If that is the case (i.e.\ if $h \in B$), then $B$ could be ignored, and the algorithm would not run a location test on it. This allows one to guarantee that in a successful run of the algorithm, no more than $m$ location tests are run. However performing this check comes at an extra computational cost, and it is not clear that it can be carried out efficiently. In the paper of \cite{li2019learning} the authors do not discuss this issue and rather just assume that one is able to run at most $m$ location tests through the run of the algorithm. By doing this, they remove the factor of $\log m$ from the second term in the above bound}, each of which requires $O(k^2 \log^2 n)$ queries and $O(k^3 \log^2 n)$ decoding time. The total then equals:
    \[
        O(k\bar{m} \log^2\bar{m} + k^2 \bar{m} \log \bar{m} \log^2 n ))
    \]
    queries and:
    \[
        O(k\bar{m} \log^2\bar{m} + k^3 \bar{m} \log \bar{m} \log^2 n))
    \]
    decoding time as needed (Recall that $m = \Theta(\bar{m})$ with probability that tends to $1$ as $n \to \infty$).
\end{proof}

\section{OTHER ALGORITHMIC RESULTS}\label{sec:lower-query-complexity-algorithms}

This section presents hypergraph analogues of popular group testing algorithms, building upon the results given by \cite{li2019learning} in the context of graphs. We also provide formal guarantees on the query complexity and success probability of the algorithms we describe, showing that these algorithms have a better query complexity ($O(k\bar{m} \log n)$) than Algorithm~\ref{alg:grotesque} (at the price of longer decoding time). We defer the proofs of the results in this section to Section~\ref{sec:missing-proofs-group-testing} of the Supplementary Materials.

The three algorithms we adapt are ``Combinatorial Orthogonal Matching Pursuit'' \texttt{COMP}, ``Definite Defectives'' \texttt{DD}, and ``Smallest Satisfying Set'' \texttt{SSS}. The \texttt{COMP} algorithm for group testing simply rules out all of the elements that have appeared in any negative tests and returns the remaining elements. The \texttt{DD} algorithm, first rules out all elements that appear in any negative test, then outputs all the elements that must be defective out of the remaining elements. \texttt{SSS} simply returns a satisfying assignment of minimum cardinality. We refer the reader to the survey of \cite{aldridge2019group} for a review of how these algorithms are used in group testing.

All three of these algorithms produce an estimate $\hat{G}$ of the hypergraph $G$ based on the result of a single batch of Bernoulli queries. In particular, we will assume that each algorithm takes as input a collection $\{X^{(i)}\}_{i\in [t]}$ of hyperedge-detection queries, where each $X^{(i)} \subseteq V$ is chosen according to a Bernoulli design with parameter $p = \sqrt[k]{k\nu\over qn^k}$ (for some $\nu$ to be defined). We also assume the algorithms have access to the results $\{Y^{(i)}\}_{i\in [t]}$ of the queries, where:
\[
    Y^{(i)} =\begin{cases}
        1 &\text{if there exists }h\in H \text{ s.t. }h\subseteq X^{(i)}\\
        0 & \text{otherwise.}
    \end{cases}
\]
Since the algorithms themselves are deterministic, all of the probabilistic guarantees are based on the randomness in both the choice of Bernoulli queries and the hypergraph generation process.

\paragraph{The \texttt{COMP} Algorithm.} The first algorithm we examine is \texttt{COMP} (Algorithm~\ref{alg:COMP}). The key observation behind this algorithm is the following: no collections $h$ of $k$ vertices can be a hyperedge in $G$ if all the vertices in $h$ appear in some query $X^{(i)}$ with $Y^{(i)} =0$. The algorithm then simply assumes each hyperedge $h$ is present in $G$ unless it satisfies this condition.

\begin{algorithm}
   \caption{\texttt{COMP}}\label{alg:COMP}
   \begin{algorithmic}
        \State \textbf{Input:} A hyperedge-detection oracle for a hypergraph $G$, $t$ hyperedge-detection queries $\{X^{(1)},\dots,X^{(t)}\}$, and the results $\{Y^{(1)},\dots,Y^{(t)}\}$ of the queries.
    
        \State \textbf{Output:} A hypergraph $\widehat{G}=(V,\widehat{E})$.
    
        \State \textbf{Initialize} $\widehat{E}$ to contain all $\binom{n}{k}$ edges
            \For {each $i$ such that $Y^{(i)}=0$}%
    
                \State Remove all $h$ from $\widehat{E}$ satisfying $h\subseteq X^{(i)}$\;
            \EndFor
    \State \textbf{return} $\widehat{G}=(V, \widehat{E})$
    \end{algorithmic}
\end{algorithm}
We obtain the following guarantees on the performance of \texttt{COMP}. 
\begin{restatable}{theorem}{TheoremCOMP}\label{thm:COMP-guarantees}
    If \texttt{COMP} is given as input an unknown hypergraph $G$ sampled from $G^{(k)}(n,q)$ that is in the typical instance setting, where we have $q=\Theta\left(n^{k(\theta-1)}\right)$ for some $\theta \in (0,1)$, as well as at least $t=k e \cdot \bar{m} \log n$ Bernoulli queries with parameter $\nu=1$, then it outputs $\hat{G} = G$ with probability $\Omega(1)$.
\end{restatable}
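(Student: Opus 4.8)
The plan is to bound the failure probability of \texttt{COMP} and show it is $o(1)$ (or at least bounded below $1$) under the stated parameters. Recall that \texttt{COMP} can only err in one direction: it never removes a true hyperedge, since a true hyperedge $h \in E$ is contained in some positive query and thus survives every removal step. Hence $\widehat{E} \supseteq E$ always, and the only way $\widehat{G} \neq G$ is if some non-edge $h \in \binom{V}{k} \setminus E$ is never ruled out, i.e.\ $h$ appears intact (all $k$ of its vertices present) in no negative query. I would first condition on the event $G \in \mathcal{T}(\varepsilon)$, which by Lemma~\ref{lem:typical-set} holds with probability $1-o(1)$; on this event $m(G) \le (1+\varepsilon)\bar m$ and, crucially, each Bernoulli query is negative with probability $1 - P_G \ge 1 - (1+\varepsilon)(1-e^{-\nu}) = e^{-1} - o(1)$ when $\nu = 1$.

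The core computation is the following union bound. Fix a non-edge $h$. A single Bernoulli query $X^{(i)}$ with parameter $p$ (where $p^k = \tfrac{\nu}{m}(1+o(1)) = \tfrac{1}{m}(1+o(1))$) contains all of $h$ with probability $p^k$, and is negative with probability at least roughly $e^{-1}$. The query ``rules out'' $h$ iff it contains $h$ and is negative; by a short argument (the event ``$h \subseteq X^{(i)}$'' makes the test no less likely to be negative than an unconditioned test, since including more vertices only makes a test more likely to be positive — so conditioning on $h$ being present can only decrease the negative probability, and one must be slightly careful here) this happens with probability at least $p^k(1 - P_G^{(h)})$ for an appropriate conditional positive probability; the clean way is to note $\Pr[\text{rule out } h] \ge p^k \cdot \Pr[\text{no edge in } X^{(i)} \mid h \subseteq X^{(i)}]$ and bound the latter below by a constant $c = e^{-1}(1-o(1))$ using the typicality of $G$ and the fact that the expected number of edges inside a Bernoulli query is $\Theta(1)$. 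Then the probability $h$ is never ruled out over $t$ independent queries is at most $(1 - c\,p^k)^t \le \exp(-c\,p^k t)$. With $p^k = \tfrac{1}{m}(1+o(1))$ and $t = ke\bar m \log n = ke\,m\log n\,(1+o(1))$, the exponent is $-c \cdot ke\log n \cdot (1+o(1)) = -(k - o(1))\log n$ provided $c e \ge 1$, i.e.\ choosing $\nu=1$ so that $c \approx e^{-1}$ and $ce \approx 1$; to get a genuine bound below one wants the exponent strictly less than $-k\log n$, which may require tracking the constants a little more carefully (this is where the precise choice $\nu=1$ and the typical-set slack $\varepsilon$ are used). Finally, union-bounding over at most $\binom{n}{k} \le n^k$ non-edges gives total failure probability $\le n^k \cdot e^{-(k-o(1))\log n} = n^{o(1)} \cdot (\text{something})$, and one needs the exponent to beat $n^k$ — so in fact one wants $c\,p^k t \ge (k+\Omega(1))\log n$. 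I would reconcile this by being more precise: $1 - P_G \ge (1-\varepsilon)(1-e^{-\nu})/\,$(something)$\,$, and the constant in front of $\bar m\log n$ in $t$, namely $ke$, combined with $p^k m \to 1$, yields exponent $\ge k\log n \cdot e \cdot (1-e^{-1})(1-o(1))$; since $e(1-e^{-1}) = e - 1 > 1$, the exponent exceeds $k\log n$ by a constant factor, so $n^k e^{-(e-1)k\log n} = n^{-(e-2)k + o(1)} \to 0$. That is the desired $o(1)$, hence success with probability $1 - o(1) = \Omega(1)$.

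The steps in order: (1) observe the one-sided error and reduce to: every non-edge is ruled out; (2) condition on $G \in \mathcal{T}(\varepsilon)$ via Lemma~\ref{lem:typical-set}; (3) for a fixed non-edge $h$, lower-bound the per-query ``rule-out'' probability by $p^k \cdot \Pr[X^{(i)} \text{ negative} \mid h \subseteq X^{(i)}] \ge p^k(1-e^{-1})(1-o(1))$ using typicality and the Bernoulli parameter; (4) by independence of the $t$ queries, bound the probability $h$ survives by $\exp(-t p^k (1-e^{-1})(1-o(1)))$; (5) plug in $t = ke\bar m\log n$ and $p^k m = 1 + o(1)$ to get survival probability $\le n^{-(e-1)k(1-o(1))}$; (6) union bound over the $\le n^k$ non-edges to conclude total failure $\le n^{-(e-2)k + o(1)} + o(1) = o(1)$.

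The main obstacle I anticipate is step (3): making rigorous the claim that conditioning on $h \subseteq X^{(i)}$ does not hurt the probability of the query being negative by more than a $(1+o(1))$ factor. Conditioning on $h$'s vertices being present can only help \emph{other} edges touching those vertices appear, potentially raising the positive probability; one must argue that the expected number of such ``dangerous'' edges incident to the fixed $k$-set $h$ is $o(1)$ (using $\Delta(G) \le d_{\max}$ from typicality and $p = o(1)$), so the conditional negative probability is still $(1-e^{-1})(1-o(1))$. This is the hypergraph analogue of the corresponding lemma in \cite{li2019learning} and should go through with the degree bound from $\mathcal{T}^{(2)}(\varepsilon)$, but it requires care in the regime $\theta \le 1/k$ versus $\theta > 1/k$ where $d_{\max}$ differs.
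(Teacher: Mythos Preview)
Your overall strategy matches the paper's proof exactly: observe the one-sided error, condition on the typical set, bound the per-test probability that a fixed non-edge $h$ fails to be ruled out, exponentiate over $t$ independent tests, and union-bound over $\le n^k$ non-edges. The paper also handles your anticipated obstacle in step~(3) precisely the way you suggest, bounding the conditional positive probability by
\[
P_G[\,h\subseteq X^{(i)}\,]\;\le\; p^{k-1}\,k\,\Delta(G)\;+\;P_G,
\]
and then noting $p^{k-1}k\Delta(G)=o(1)$ via the typical-set degree bound. So the decomposition and the key lemma are the same.

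However, your ``reconciliation'' in the second paragraph and your numbered step~(3) contain a genuine arithmetic slip that makes the argument appear stronger than it is. You correctly compute early on that the (conditional) \emph{negative} probability of a Bernoulli test is $1-P_G\approx e^{-1}$ when $\nu=1$, and hence $c\approx e^{-1}$ and $ce\approx 1$; you even flag that this makes the exponent only $-(k-o(1))\log n$. But then, in trying to close the gap, you silently replace $e^{-1}$ by $1-e^{-1}$, writing the rule-out probability as $p^k(1-e^{-1})(1-o(1))$ and computing the exponent as $k\log n\cdot e(1-e^{-1})=(e-1)k\log n$. This is wrong: $1-e^{-1}$ is the \emph{positive} test probability $P_G$, not the negative one. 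With the correct constant $c=e^{-1}$ the survival bound is $\exp\bigl(-t/(em)(1+o(1))\bigr)$, exactly as in the paper, and the union bound gives
\[
\Pr[\text{error}]\;\le\; n^k\,e^{-\frac{t}{em}(1+o(1))},
\]
which is \emph{tight} at $t=ke\,m\log n$. The paper therefore concludes that one needs $t\ge k e\,m\log n\,(1+\eta)$ for arbitrarily small $\eta>0$ to force the error to vanish, and then absorbs the passage from $m$ to $\bar m$ using $m=\bar m(1+o(1))$ on the typical set. Your claimed $n^{-(e-2)k}$ bound is an artifact of the swapped constant; there is no extra factor of $e-1$ to spare.
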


\paragraph{The \texttt{DD} Algorithm.} \texttt{COMP}'s method of assuming edges are present until proven otherwise may be rather inefficient since we are looking for a sparse graph 
The \texttt{DD} algorithm reverses this assumption and starts with all edges as non-edges, making use of \texttt{COMP} to preclude non-edges. 

\begin{algorithm}
   \caption{\texttt{DD}}\label{alg:DD}
   \begin{algorithmic}       
        \State \textbf{Input:} A hyperedge-detection oracle for a hypergraph $G$, $t$ sets of vertices, $\{X^{(1)},\dots,X^{(t)}\}$, to be queried, with oracle given binary responses, $\{Y^{(1)},\dots,Y^{(t)}\}$.
        \State \textbf{Output:} A hypergraph $\widehat{G}=(V,\widehat{E})$.
        \State\textbf{Initialize} $\widehat{E}=\emptyset$, and initialize a potential edge set, $\mathrm{PE}$, to contain all $\binom{n}{k}$ edges
        \For{ each $i$ such that $Y^{(i)}=0$ }
            \State Remove all edges from PE whose nodes are all in $X^{(i)}$
        \EndFor
        \For{ each $i$ such that $Y^{(i)}=1$}
        
            \State If the nodes from $X^{(i)}$ cover exactly one edge in $\mathrm{PE}$, add that edge to $\widehat{E}$\;
        \EndFor
        \State \textbf{return} $\widehat{G}=(V, \widehat{E})$.
   \end{algorithmic}
\end{algorithm}

\begin{restatable}{theorem}{TheoremDD}\label{thm:DD-guarantees}
If we have an unknown Erd\H{o}s-R\'enyi hypergraph that is in the typical instance setting, where we have $q=\Theta\left(n^{k(\theta-1)}\right)$ for some $\theta \in(0,1)$, and Bernoulli testing with parameter $\nu=1$, then with at least $k \max \{\theta, 1-\theta, 1-\theta/2, 1+\theta/2-1/k\} e \cdot \bar{m} \log n$ non-adaptive queries \texttt{DD} outputs the correct answer with probability $\Omega(1)$.
\end{restatable}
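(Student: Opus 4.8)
The plan is to mirror the standard analysis of the \texttt{DD} algorithm in group testing (as in the survey of \cite{aldridge2019group}), carefully tracking the extra factor of $k$ and the combinatorics introduced by the hyperedge structure. The algorithm succeeds whenever two things happen: (i) every non-hyperedge $h \in \binom{V}{k}\setminus E$ is removed from $\mathrm{PE}$ by some negative test, so that after the first loop $\mathrm{PE} = E$ exactly (or at least $\mathrm{PE}$ contains only true hyperedges); and (ii) every true hyperedge $h \in E$ is "identified" by some positive test $X^{(i)}$ that covers $h$ but no other edge remaining in $\mathrm{PE}$. Conditioning on the typical-instance event (Lemma~\ref{lem:typical-set}), we may assume $m = \Theta(\bar m)$, $\Delta(G) \le d_{\max}$, and $P_G = \Theta(1-e^{-1})$, and then union-bound over the failure events for (i) and (ii).

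First I would handle (i), the "masking of non-edges" step, which is essentially the \texttt{COMP} analysis already carried out in Theorem~\ref{thm:COMP-guarantees}: a fixed non-edge $h$ survives all negative tests only if, for every $i$ with $Y^{(i)} = 0$, not all $k$ vertices of $h$ land in $X^{(i)}$; bounding the probability that a given test is negative-and-covers-$h$ from below by a constant times $p^k = \Theta(1/\bar m)$ (using $P_G$ bounded away from $1$) and union-bounding over the $\binom{n}{k} = \Theta(\bar m / q)$ non-edges gives that $t = \Omega(k \bar m \log(1/q)) = \Omega(k\bar m \log n)$ — with the right constant $k \cdot \max\{\theta, 1-\theta\} e$ — suffices. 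Here the $\max\{\theta, 1-\theta\}$ arises because $\log(1/q) = k(1-\theta)\log n$ governs the number of non-edges while $\log \bar m = k\theta \log n$ governs the number of true edges, and we need $\log n$ to dominate both.

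Next I would handle (ii). Fix a true hyperedge $h \in E$. A positive test $X^{(i)}$ identifies $h$ iff $h \subseteq X^{(i)}$ and no other edge $h' \in \mathrm{PE}$ satisfies $h' \subseteq X^{(i)}$. I would first condition on event (i) so that $\mathrm{PE} = E$, reducing "no other edge in $\mathrm{PE}$" to "no other edge of $G$". The probability that a single test covers $h$ is $p^k = \Theta(1/\bar m)$, and given that, the conditional probability that it covers some other hyperedge splits (as in Lemma~\ref{lemma:hyperedge-is-unique-in-bundle}) into edges sharing a vertex with $h$ — at most $k\Delta(G) \le k d_{\max}$ of them, each covered with probability $\le p^{k-1}$ (the $j$-wise intersection refinement can be used to be sharper) — and edges disjoint from $h$, at most $m$ of them each covered with probability $p^k$. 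The sparsity assumptions ($k = O(1)$, $m = o(n^{k/(k-1)})$, and the $d_{\max}$ bounds in both regimes of $\theta$) make both correction terms $o(1)$, so a positive identifying test occurs with probability $\Theta(1/\bar m)$ per query; union-bounding the complementary event over all $t$ queries and then over all $\le (1+\varepsilon)\bar m$ hyperedges, $t = \Omega(k\bar m \log \bar m)$ again suffices, now with a constant reflecting the $\max\{1-\theta/2, 1+\theta/2 - 1/k\}$ terms, which I expect to come from bounding the contribution of the partially-overlapping edges (the exponent $1+\theta/2-1/k$ looks like it tracks $d_{\max}\cdot p^{k-1}$ in the dense regime, and $1-\theta/2$ the corresponding bound in the sparse regime, via $p^{k-1} = \Theta((\bar m)^{-(k-1)/k}) = \Theta(n^{-k\theta(k-1)/k})$ type estimates). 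The main obstacle is precisely this bookkeeping: getting the four exponents in the $\max$ to fall out cleanly requires a careful case split on $\theta \lessgtr 1/k$ (matching the definition of $d_{\max}$) and a tight handling of the overlapping-edge term — using inclusion over the size $j \in \{1,\dots,k-1\}$ of the overlap $|h \cap h'|$ rather than the crude $k\Delta(G)$ bound — so that no spurious $\log$ factors or unnecessary constraints appear. Finally, combining the two union bounds and choosing $\nu = 1$ fixes $p$ and hence all the $\Theta(\cdot)$ constants, yielding success probability $\Omega(1)$.
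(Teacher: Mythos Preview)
Your decomposition has a genuine gap. Requiring event (i) --- that after the first loop $\mathrm{PE}=E$ exactly --- is precisely the success event for \texttt{COMP}, and by the analysis in Theorem~\ref{thm:COMP-guarantees} that event already needs $t\geq ke\bar m\log n$ queries. Since $\max\{\theta,1-\theta,1-\theta/2,1+\theta/2-1/k\}<1$ for every $\theta\in(0,1)$ and $k\geq 2$, the bound you are trying to prove is strictly smaller than the \texttt{COMP} bound, so conditioning on $\mathrm{PE}=E$ cannot yield it. In particular, your claim that step (i) succeeds with constant $k\max\{\theta,1-\theta\}e$ is wrong: union-bounding over all $\binom{n}{k}\asymp n^k$ non-edges, each surviving with probability $\asymp e^{-t/(em)}$, forces $t\gtrsim kem\log n$, i.e.\ constant $k\cdot 1$, not $k(1-\theta)$.

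The paper (following \cite{li2019learning}) does \emph{not} require $\mathrm{PE}=E$. It tolerates residual non-edges in $\mathrm{PE}$ and only controls two counts: $H_0$, the total number of non-edges left in $\mathrm{PE}$, and $H_1$, the number of those that share at least one vertex with a true hyperedge. One bounds $\mathbb{E}[H_0]\leq n^k e^{-t/(em)(1+o(1))}$ and $\mathbb{E}[H_1]\leq \min\{mkn^{k-1},n^k\}\, e^{-t/(em)(1+o(1))}$, then applies Markov to get $H_0\leq n^{k\xi_0}$ and $H_1\leq n^{k\xi_1}$ with high probability. Choosing $\xi_0\uparrow\theta$ (so that $H_0=o(m)$) and $\xi_1\uparrow\theta/2$ (so that $H_1=o(\sqrt m)$) is exactly what produces the $1-\theta$ term and the pair $\{1-\theta/2,\;1+\theta/2-1/k\}$, the case split on $\theta\lessgtr 1/k$ coming from which argument of the $\min$ dominates. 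The remaining $\theta$ term then arises in the second loop: once $H_0=o(m)$ and $H_1=o(\sqrt m)$, a union bound over the $m=\Theta(n^{k\theta})$ true hyperedges gives the requirement $t\geq k\theta\, em\log n(1+\eta)$. So the four constants in the $\max$ do not split into ``two from masking, two from identification'' as you guessed; three come from the relaxed control of $H_0$ and $H_1$ in the first loop, and only $\theta$ comes from the identification step. Your speculation that $1-\theta/2$ and $1+\theta/2-1/k$ emerge from overlap sizes $|h\cap h'|$ in the identification step is therefore misplaced.
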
 

\paragraph{The \texttt{SSS} Algorithm.}
The \texttt{SSS} algorithm works by finding the smallest set of edges such that the output is consistent with the Bernoulli test results, i.e. $\{Y^{(i)}\}_{i\in [t]}$. Since \texttt{SSS} searches for the minimal satisfying graph, it gives a lower bound to the size of the output of any Bernoulli-queries-based decoding algorithm. 

\begin{algorithm}
   \caption{\texttt{SSS}}\label{alg:SSS}
   \begin{algorithmic}       
        \State \textbf{Input:} A hyperedge-detection oracle for a hypergraph $G$, $t$ sets of vertices, $\{X^{(1)},\dots,X^{(t)}\}$, to be queried, with oracle given binary responses, $\{Y^{(1)},\dots,Y^{(t)}\}$.
        \State \textbf{Output:} A hypergraph $\widehat{G}=(V,\widehat{E})$.
        \State Find $\widehat{E}$ such that $|\widehat{E}|$ is minimized while satisfying $\{Y^{(1)},\dots,Y^{(t)}\}$ 
        \State \textbf{return} $\widehat{G}=(V, \widehat{E})$.
   \end{algorithmic}
\end{algorithm}

\begin{restatable}{theorem}{TheoremSSS}\label{thm:SSS-guarantees}
    If we have an unknown Erd\H{o}s-R\'enyi hypergraph that is in the typical instance setting, where we have $q=\Theta\left(n^{k(\theta-1)}\right)$ for some $\theta \in(0,1)$, and Bernoulli testing with an arbitrary choice of $\nu>0$, then with at least $k \theta e \cdot \bar{m} \log n$ non-adaptive queries the \texttt{SSS} algorithm outputs the correct answer with probability $\Omega(1)$. 
\end{restatable}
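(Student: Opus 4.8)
The plan is to follow the standard group-testing analysis of the \texttt{SSS} algorithm (as in the survey of~\cite{aldridge2019group}), but transported into the hyperedge-detection setting via the typical-instance machinery of Section~\ref{sec:typical-instances}. The key conceptual point is that \texttt{SSS} outputs the true hypergraph $G$ exactly when $G$ itself is the \emph{unique} minimum-cardinality hypergraph consistent with the query results $\{Y^{(i)}\}$. Since $G$ is always consistent with its own query results, \texttt{SSS} fails only if there is some \emph{other} hypergraph $G'$ with $m(G') \le m(G)$ that produces the identical outcome vector. Because every $X^{(i)}$ is a Bernoulli query with parameter $p$ chosen so that $p^k = \tfrac{\nu}{m}(1+o(1))$, I would condition on the event $G \in \mathcal{T}(\varepsilon)$ (which holds w.h.p.\ by Lemma~\ref{lem:typical-set}), so that $m = \Theta(\bar m)$, $\Delta(G) \le d_{\max}$, and $P_G = (1\pm\varepsilon)(1-e^{-\nu})$.

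The core of the argument is a union bound over ``confusable'' alternative hypergraphs. First I would argue that any consistent $G'$ with $m(G') \le m(G)$ that differs from $G$ must in fact \emph{omit} at least one true hyperedge $h \in E$ while only possibly adding others: if $G' \supsetneq$ contained all of $E$ it would have strictly larger size unless it equals $G$. So it suffices to bound, for each $h \in E$, the probability that the query results remain explained without $h$ — i.e.\ that every query $X^{(i)}$ with $h \subseteq X^{(i)}$ also fully contains some \emph{other} hyperedge (either another true hyperedge of $G$, or one of the at most $m(G)$ ``replacement'' edges $G'$ is allowed to introduce). The probability that a single Bernoulli query contains a \emph{fixed} $k$-set is $p^k = \tfrac{\nu}{m}(1+o(1))$, so the expected number of queries capturing $h$ is $\Theta(tp^k) = \Theta(t\nu/m)$, which for $t = k\theta e\,\bar m\log n$ and $\nu$ constant is $\Theta(k\theta e \log n)$. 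Conditioned on $h$ being captured by a query, the probability that this query \emph{also} contains some witnessing edge disjoint-or-overlapping with $h$ is, by the typicality bound on $P_G$ together with a union bound over the $O(m)$ candidate edges and an inclusion–exclusion correction for the shared vertices of $h$, at most some constant $\rho < 1$ (this is where the constant from the Bernoulli detection probability enters, analogously to $1-e^{-\nu}$ in the graph case, with the exponent controlled by $k$). Thus the probability that \emph{all} captured queries for a fixed $h$ are ``explained away'' is at most $\rho^{\text{(number of captured queries)}}$, and taking expectations over the Binomial number of captured queries gives a bound of roughly $\exp\!\big(-(1-\rho)\,tp^k\big) = \exp(-\Omega(\theta\log n)) = n^{-\Omega(\theta)}$ per hyperedge; multiplying by $m = \Theta(\bar m) = \Theta(n^{k\theta})$ choices of $h$ and by the count of replacement-edge possibilities, the leading constant $k\theta e$ in $t$ is precisely what is needed to drive this union bound to $o(1)$, using $\binom{n}{k} \le n^k$ and $e^{-1}$ from the optimized Bernoulli parameter.

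I expect the main obstacle to be handling the \emph{joint} combinatorics of the replacement edges in $G'$: \texttt{SSS} may trade away one true hyperedge for a different set of up to $m(G)$ edges simultaneously, so a naive per-hyperedge union bound is not obviously valid. The clean way around this is the observation above that it is enough for \emph{some} true hyperedge $h$ to be omittable given an \emph{arbitrary} set of $m(G)$ added edges, and then to union-bound over $h$ while using the worst-case added-edge set inside the conditional failure probability — this only costs a $\binom{n}{k}^{m}$-type factor in the exponent, which is dominated once the $\log n$ factor in $t$ is present with the right constant. A secondary technical point is that the queries $X^{(i)}$ and the hypergraph $G$ are independent, so all the conditional-probability statements above are genuinely over product randomness, and the typicality event only needs to be invoked once at the start; I would state this explicitly before the union bound. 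The remaining steps — the Chernoff bound for the Binomial number of captured queries and plugging $\nu$ into the Bernoulli detection constant — are routine and parallel the graph-case computation in~\cite{li2019learning}.
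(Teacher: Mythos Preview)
Your proposal has a genuine gap in the treatment of the ``replacement'' edges. You write that union-bounding over the worst-case added-edge set ``only costs a $\binom{n}{k}^{m}$-type factor in the exponent, which is dominated once the $\log n$ factor in $t$ is present.'' It is not: $\log\binom{n}{k}^{m}=\Theta(km\log n)$, whereas the failure exponent you obtain for a single fixed hyperedge $h$ is only $\Theta(tp^{k})=\Theta(k\theta e\log n)$. The two differ by a factor of $m$, and no constant in $t=\Theta(\bar m\log n)$ can close that. Relatedly, you never use the constraint coming from the \emph{negative} tests: any edge in $G'\setminus E$ must avoid every negative query, i.e.\ it must be a \texttt{COMP} false positive. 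This is what cuts the pool of admissible replacements from $\binom{n}{k}$ down to something manageable, and it is the step that makes a per-edge argument (rather than an enumeration of $G'$) sufficient.

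The paper does not enumerate alternative hypergraphs at all. Following \cite{li2019learning}, it reduces success of \texttt{SSS} to a masking condition on the \emph{true} edge set: one bounds, for each true hyperedge $h=(i_1,\dots,i_k)$, the probability that every test containing $h$ also contains a second hyperedge of $G$. The only work specific to the hypergraph setting is to verify that the correction terms $\xi=(1+k\Delta(G))p^{k}$ and $\xi'=\Delta(G)(1+p)^{k}$ governing masking by overlapping versus non-overlapping true hyperedges still satisfy $\xi,\xi'=O(n^{-c})$ under typicality, so that
\[
\frac{e^{-2t\left(p^{k}\xi+O(p^{2k})\right)}}{e^{\,2tp^{k}\xi'}}\to 1,
\]
after which the graph-case computation of \cite{li2019learning} (using $tp^{k}=\Theta(\log n)$, $\Delta(G)p^{k-1}=o(1)$, and $m=\Theta(n^{k}q)$) yields the threshold $t\ge k\theta e\,\bar m\log n$ directly. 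Your high-level framing (``\texttt{SSS} fails only if some true $h$ can be dropped'') is correct, but the quantitative route through a union over $G'$ cannot be completed; you need the masking/\texttt{COMP}-false-positive decomposition instead.
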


\section{OPEN PROBLEMS} The main open problem remains to improve the sparsity level of the low decoding time hypergraph learning \texttt{HYPERGRAPH-GROTESQUE} or to show that the sparsity assumption is necessary. Another direction is to improve its decoding time, which seems very likely to at least be possible with respect to logarithmic factors.

\newpage 
\bibliographystyle{abbrvnat}
\bibliography{bibliography}

\newpage
\onecolumn
\appendix
\vspace{-2.5cm}
\section{PROOFS FOR SECTION \ref{sec:typical-instances}}\label{sec:proof-of-lower-bound}
\LowerBoundTheorem*
\begin{proof}
We have the following entropy inequality from \cite{li2019learning}: 
    $$
\begin{aligned}
P_{\mathrm{e}}  \geq \mathbb{P}[\mathcal{A}] \frac{H(G \mid \mathcal{A}=\text { true })-I(G ; \widehat{G} \mid \mathcal{A}=\text { true })-\log 2}{\log \left|\mathcal{G}_{\mathcal{A}}\right|},
\end{aligned}
$$
where $P_{\mathrm{e}}$ is the probability of outputting the incorrect graph, $\mathcal{A}$ is the event that a graph satisfies condition one of the $\varepsilon$-typical hypergraph set and $\mathcal{G}_{\mathcal{A}}$ is the set of graphs such that $(1-\varepsilon) \bar{m} \leq m \leq \bar{m}(1+\varepsilon)$.
From the typicality conditions we have:
\begin{itemize}
    \item $\mathbb{P}[\mathcal{A}]=1-o(1)$
    \item $\log \left|\mathcal{G}_{\mathcal{A}}\right|=\binom{n}{k}H_2(q)(1+o(1))$
    \item $H(G \mid \mathcal{A}=$ true $)=\binom{n}{k} H_2(q)(1+o(1))$, where $H_2(q)=q \log \frac{1}{q}+(1-q) \log \frac{1}{1-q}$ is the binary entropy function.
\end{itemize}

We also have from \cite{li2019learning}:
\begin{itemize}
    \item $I(G ; \widehat{G} \mid \mathcal{A}=$ true $) \leq$ $I(G ; \mathbf{Y} \mid \mathcal{A}=$ true $)\le t \log 2$, where $t$ represents the total amount of queries. 
\end{itemize}
Together, yielding:
$$
P_{\mathrm{e}} \geq\left(1-\frac{t \log 2}{\binom{n}{k} H_2(q)}\right)(1+o(1)) .
$$

In our setting, $q \rightarrow 0$, thus $H_2(q)=\left(q \log \frac{1}{q}\right)(1+o(1))$, and hence
$$
P_{\mathrm{e}} \geq\left(1-\frac{t \log 2}{\frac{1}{k} q n^k \log \frac{1}{q}}\right)(1+o(1)) .
$$

Since $\bar{m}=\frac{1}{k} q n^k(1+o(1))$, we conclude that to have vanishing error probability we must have at least $\left(\bar{m} \log _2 \frac{1}{q}\right)(1-\eta)$ queries, for arbitrarily small $\eta>0$.

\end{proof} 
\newpage

\section{PROOFS FOR SECTION~\ref{sec:grotesque}}\label{sec:missing-proofs-grotesque}
In this section, we prove all the lemmas from Section~\ref{sec:grotesque} that are not proved in the body of the paper. We will make use of the following version of Hoeffding's inequality. 
\begin{lemma}\label{lem:hoeffding}
Let $X\sim$ Binom$(n,p)$. Then, for any $t >0$:
    \[
        \Pr\left[\left|{X\over n} - p 
        \right|\geq t\right] \leq 2e^{-2nt^2}.
    \]
\end{lemma}

\detectionProbability*
\begin{proof}%
    By assumption, the set $\mathcal{B}$ contains at least two distinct hyperedges. Fix two distinct hyperedges $h$ and $h'$. Then let:
    \begin{itemize}
        \item $D$ be the event that the set $S$ contains a full hyperedge $\mathcal{B}$,
        \item $D_h$ be the event that $S$ contains $h$, i.e. $h \subseteq S$
        \item $D_{h'}$ be the event that $S$ contains $h'$, i.e. $h' \subseteq S$,
        \item $D_{h\cap h'}$ be the event that $S$ satisfies $(h\cap h') \subseteq S$,
        \item $a$ be the cardinality of the intersection of $h$ and $h'$, i.e. $a:= |h\cap h'|$. Note that $a$ is some integer between $0$ and $k-1$.
    \end{itemize}
    Since each element of $\mathcal{B}$ is included in $S$ independently, the events $D_h$ and $D_{h'}$ are conditionally independent given $D_{h\cap h'}$. We then have:
    \begin{align*}
        \Pr[D] &\geq \Pr[D_h \cup D_{h'}] = \Pr[D_h \cup D_{h'} \mid D_{h\cap h'}] \Pr[D_{h\cap h'}]\\
        &=\left(1-\Pr[\overline{D_h} \cap \overline{D_{h'}}\mid D_{h\cap h'}]\right) \Pr[D_{h\cap h'}]\\
        &= \left(1-\Pr[\overline{D_h}\mid D_{h\cap h'}]\cdot \Pr[\overline{D_{h'}}\mid D_{h\cap h'}]\right) \left({1\over \sqrt[k]{e}}\right)^a\\
        &= \left[1-\left(1-\left({1\over \sqrt[k]{e}}\right)^{k-a}\right)^2\right] \left({1\over \sqrt[k]{e}}\right)^a\\
        &= {2 \over e} - \left({1\over e}\right)^{2k-a\over k} \geq {2 \over e} - \left({1\over e}\right)^{k+1\over k},
    \end{align*}
    as needed.
\end{proof}

\MultiplicityTestIsCorrect*
\begin{proof}%
    If the bundle contains no hyperedge, then the fraction of positive tests will be zero and the algorithm will return $0$ every time. Otherwise, Let $R_0, R_1$ be the events that the multiplicity test returns $0$ and $1$ respectively. If the bundle contains a single hyperedge, the probability that any given edge-detection test returns $1$ is $p_{single} = (e^{-k})^k = e^{-1}$. Applying Lemma~\ref{lem:hoeffding}, we obtain:
    \[
        \Pr[ R_1 ] \geq  \Pr\left[ \left|\hat{p} - {1\over e}\right| < {M\over 2}\right] \geq 1-2e^{-t_{mul}M^2/2} = 1-\delta.
    \]
    
    On the other hand, if the bundle contains at least two hyperedges, by Lemma~\ref{lem:detection-probability} the probability $p_{multiple}$ that any individual test detects a hyperedge satisfies:
    \begin{equation}\label{eq:probability-test-multiple-edge}
        p_{multiple} \geq {2\over e} - {1\over e^{(k+1)/k}}.
    \end{equation}
    Hence, in this case:
    \[
        \Pr[R_0] \geq \Pr\left[\left|\hat{p} - p_{multiple}\right| < M \right] \leq e^{-t_{mul} M^2/2} = 1-\delta.
    \]
    as needed.
\end{proof}

\MultiplicityTestIsEfficient*
\begin{proof}%
    The number of queries made is:
    \[
        t_{mul} = 2 \log {2 \over \delta} e^2 {e^{2/k} \over \sqrt[k]{e} -1} \leq e^3 k \log {2\over \delta},
    \]
    The decoding time is proportional to the number of queries.
\end{proof}

\section{ALGORITHMIC UPPER BOUNDS}\label{sec:missing-proofs-group-testing}
In this section we complete the analysis of the results in Section~\ref{sec:lower-query-complexity-algorithms}. We will make use of the assumption that $t$ represents the total number of tests, also referred to as queries. 

\TheoremCOMP*

\begin{proof}
    We will adapt the proof of \cite{li2019learning} of the analogous theorem for graphs. We note that conditioning on the random graph being in the typical set of graphs with high probability, we need only show that using the above stated amount of tests yields error probability approaching zero. So we examine the probability of failing to identify a non-edge, say $(i_1,\dots,i_k)\not\in E$, in the hypergraph setting this probability changes slightly. Recall, that there are two ways a test could fail to identify a non-edge: \begin{enumerate}
        \item at least one vertex in $(i_1,\dots,i_k)$ isn't present in the test
        \item $(i_1,\dots,i_k)$ is contained in the test, but another edge of $G$, our hypergraph, is also present in the test.
    \end{enumerate}
    This results in the probability of failing to identify $(i_1,\dots,i_k)$ as a non-edge as $$p_{ne}=(1-p^k)+p^kP_G[\{i_1, \dots, i_k\} \subseteq \mathcal{L}],$$
    where we recall that $\mathcal{L}$ is the set of nodes in the test and $p$ is the probability of inclusion in the test, and $P_G[\{i_1, \dots, i_k\} \subseteq \mathcal{L}]$ is the probability of a positive Bernoulli test, given $\{i_1, \dots, i_k\}$ is included in the test. If we are to have a positive test, we have either \begin{enumerate}
        \item An edge $e=(e_1,\dots e_k)\in E$ such that $e\cap (i_1,\dots,i_k) \ne \emptyset $,
        \item An edge $e=(e_1,\dots e_k)\in E$ such that $e\cap (i_1,\dots,i_k) = \emptyset $,
    \end{enumerate} 
    included in the test. In case 1, we can examine the situation $|e\cap (i_1,\dots,i_k)| =1$, noting that if we examine all possible neighbors, the rest of $(i_1,\dots,i_k)$ is included in that set. $$
\begin{aligned}
P_G[\{i_1, \dots, i_k\} \subseteq \mathcal{L}]&\leq \mathbb{P}\left[e\cap (i_1,\dots,i_k) =1 \mid\{i_1, \dots, i_k\} \subseteq \mathcal{L}\right]+\mathbb{P}\left[e\cap (i_1,\dots,i_k) = \emptyset \mid\{i_1, \dots, i_k\} \subseteq \mathcal{L}\right] \\
&\leq p^{k-1}k\Delta(G) + P_G\\
\end{aligned}
$$

In the first term, there are at most $k$ choices of vertices to intersect with, then $\Delta(G)$ possible edges containing that vertex. Since the second event is independent of conditioning on   $\{i_1, \dots, i_k\} \subseteq \mathcal{L}$, we get that the probability is just $P_G$. We assumed our graph is in the typical set so we can substitute in $P_G=\left(1-e^\nu\right)(1+o(1))$, we also have that $\Delta(G)kp^{k-1}=o(1)$, so our probabilities in the hypergraph case align with those in \cite{li2019learning}. We note that in the hypergraph case our bound changes slightly over union-bounding over a possible $\binom{n}{k}$ non-edges, resulting in $$
\mathbb{P}[\text { error }] \leq n^k e^{-\frac{t}{e m}(1+o(1))} .
$$
After re-arranging we have that $\mathbb{P}[\text { error }] \rightarrow 0$ as long as $$t\ge kem\log n(1+\eta)$$

for arbitrarily small $\eta>0$. Since $m=\bar{m}(1+o(1))$ for all typical graphs, and the probability that $G$ is typical tends to one, we obtain the condition in our statement.
\end{proof}

\TheoremDD*
\begin{proof}
    We again adapt the proof in \cite{li2019learning} of an analogous theorem for graphs. Once again, we have a hypergraph in the typical set, so we need only show that with the stated number of queries our error probability goes to zero. 

    There are two steps to the \texttt{DD} algorithm, the first in which we find a set of 'potential' edges, this set could include non-edges. Then the second where we find the set of edges from our set of potential edges. The argument examines how large $t$, the number of queries, must be for each of these events to occur with high probability. We adapt slightly the two events in the first step of the proof in \cite{li2019learning}.
    \begin{itemize}
        \item $H_0$ is the total number of non-edges in the potential edge set, PE
        \item $H_1$ is the number of non-edges in PE such that at least one of its vertices form a part of at least one true edge.
    \end{itemize}
    We have that the amount of non-edges must be less than $n^k$, and then utilizing the probability of failing to identify a non-edge recorded in the \texttt{COMP} algorithm proof above, we have $$
\mathbb{E}\left[H_0\right] \leq n^k e^{-\frac{t}{e m}(1+o(1))} .
$$
The amount of non-edges sharing a vertex with an edge is upper-bounded by $mkn^{k-1}$, but still must be less than $n^k$, so we have $$
\mathbb{E}\left[H_1\right] \leq \min \left\{m k n^{k-1}, n^k\right\} e^{-\frac{t}{e m}(1+o(1))} .
$$
Applying Markov's inequality, we have for any  $\xi_0>0$ and $\xi_1>0$ that 
\begin{align}
& \mathbb{P}\left[H_0 \geq n^{k \xi_0}\right] \leq n^{k\left(1-\xi_0\right)} e^{-\frac{t}{e m}(1+o(1))} \\
& \mathbb{P}\left[H_1 \geq n^{k \xi_1}\right] \leq \min \left\{m k n^{k-1}, n^k\right\} n^{-k \xi_1} e^{-\frac{t}{e m}(1+o(1))} \label{(8)}.
\end{align}

After re-arranging we find that these two probabilities go to zero as $n \rightarrow \infty$ as long as
$$
\begin{aligned}
& t \geq\left(k\left(1-\xi_0\right) e m \log n\right)(1+\eta), \\
& t \geq(1+\eta) e m \log n \times \begin{cases}k\left(1-\xi_1\right) & \frac{1}{k} \leq \theta<1 \\
k\left(1+\theta-\xi_1\right)-1 & 0<\theta \leq \frac{1}{k}\end{cases} \\
&
\end{aligned}
$$
for arbitrarily small $\eta>0$. The first case uses the $n^k$ term in the $\min\{\cdot\}$ term in \ref{(8)}. The second case uses the $mkd$ term in the $\min\{\cdot\}$ term and that $m=\Theta(n^{k\theta})$ and when $\theta> \frac{1}{k}$, $d_{max}\le kn^{k-1}q=kn^{k\theta-1}$, the last case we look at when $theta\le \frac{1}{k}$, so $d_{max}= O(\log n)$. 

We show that $H_0=o(m)$ and $H_1=o(\sqrt{m})$ (with high probability). By setting $\xi_0$ to be arbitrarily close to (but still less than) $\theta$, and similarly $\xi_1$ arbitrarily close to $\theta / 2$, the above requirements simplify to $$
\begin{gathered}
t \geq(k(1-\theta) e m \log n)(1+\eta), \\
t \geq(1+\eta) e m \log n \times \begin{cases}k-k\theta/2 & \frac{1}{k}\leq \theta<1 \\
k+k\theta/2-1 & 0<\theta \leq \frac{1}{k}\end{cases}
\end{gathered}
$$
for arbitrarily small $\eta>0$.

    The second step of the algorithm can be shown to succeed as long as $$
t \geq(k \theta e m \log n)(1+\eta)
$$
for arbitrarily small $\eta>0$. This follows directly from the proof of the second step written by \cite{li2019learning}. 

\end{proof}

\TheoremSSS* 
\begin{proof}

The proof of the \texttt{SSS} algorithm bound in the hypergraph case follows extremely closely to the graph case, where the main difference is simply replacing the number of vertices with general arity term $k$.  We will just verify assumptions hold that are slightly altered in our setup. In the hypergraph case, event $A_1$ is just the event that another hyperedge intersects with the hyperedge we're seeking to find, say $(i_1, \dots i_k)$, and masking it. Therefore, $$\mathbb{P}\left[A_1^{(i_1, \dots i_k)} \mid\{i_1, \dots i_k\} \subseteq \mathcal{L}\right] \leq \Delta(G) (1+ p)^{k}$$  and defines $\xi^{\prime}=\Delta(G)  (1+ p)^{k}$. We recall that the converse bound we are trying to prove is $t=\Omega(m \log n)$, so we can assume without loss of generality that $t=\Theta(m \log n)$, as additional tests only improve the \texttt{SSS} algorithm. From \cite{li2019learning} we can assume without loss of generality that $p^k=\Theta\left(\frac{1}{m}\right)$, since if $p^k$ behaves as $o\left(\frac{1}{m}\right)$ or $\omega\left(\frac{1}{m}\right)$ then the probability of a positive test tends to 0 or 1 as $n \rightarrow \infty$, and it follows from a standard entropy-based argument that $\omega(m \log n)$ tests are needed. We claim that these conditions imply that
$$
\frac{e^{-2 t\left(p^k \xi+O\left(p^{2k}\right)\right)}}{e^{2 t p^k \xi^{\prime}}} \rightarrow 1
$$ 

We note that $\xi^{\prime}$ still behaves as $O\left(n^{-c}\right)$ for sufficiently small $c$. We also note that we have the same behavior for $\xi=(1+k\Delta(G) )p^k$. This is seen by noting that $t p^k=\Theta(\log n)$ by the above-mentioned behavior of $t$ and $p^k$.  This behavior fall inline with the $O\left(t p^{2k}\right)$ term by the above-mentioned behavior of $t$ and $p^k$, and is seen to also hold for $\xi$ and $\xi^{\prime}$ by noting that $\Delta(G)  p^{k-1}=\Theta\left(\frac{\Delta(G) \sqrt[k]{m}}{m}\right)$, along with $\Delta(G) =O(\max \{\log n, n^{k-1} q\})$, $m=\Theta(n^k q)$, and the behavior of $q$. Thus, nothing of consequence changes when generalizing to hypergraphs. 

\end{proof}
\end{document}